\begin{document}
\newtheorem{ach}{Achievability}
\newtheorem{con}{Converse}
\newtheorem{definition}{Definition}
\newtheorem{theorem}{Theorem}
\newtheorem{lemma}{Lemma}
\newtheorem{example}{Example}
\newtheorem{cor}{Corollary}
\newtheorem{prop}{Proposition}
\newtheorem{conjecture}{Conjecture}
\newtheorem{remark}{Remark}
\title{Degrees of Freedom of Interference Channels with Hybrid Beam-forming}
\author{\IEEEauthorblockN{Sung Ho Chae,~\IEEEmembership{Member,~IEEE} and Cheol Jeong,~\IEEEmembership{Member,~IEEE}
}\thanks{The authors are with the DMC R\&D Center, Samsung Electronics, Suwon, Republic of Korea.}%
}
 \maketitle

\begin{abstract}
We study the sum degrees of freedom (DoF) of interference channels with hybrid beam-forming in which each transmitter $i$ uses $M_i'$ antennas and $M_i$ RF chains and each receiver $i$ uses $N_i'$ antennas and $N_i$ RF chains, where $M_i\leq M_i'$ and $N_i\leq N_i'$, $\forall i=1,2,\ldots,K$, and hybrid beam-forming composed of analog and digital precodings is employed at each node. For the two-user case, we completely characterize the sum DoF for an arbitrary number of antennas and RF chains by developing an achievable scheme optimized for the hybrid beam-forming structure and deriving its matching upper bound. For a general $K$-user case, we focus on a symmetric case where $M_i=M$, $N_i=N$, $M_i'=M'$, and $N'_i=N'$, $\forall i=1,2,\ldots,K$, and obtain lower and upper bounds on the sum DoF, which are tight when $\frac{\max\{M',N'\}}{\min\{M',N'\}}$ is an integer. The results show that hybrid beam-forming can increase the sum DoF of interference channel under certain conditions while it cannot improve the sum DoFs of point-to-point channel, multiple access channel, and broadcast channel. The key insights on this gain is that hybrid beam-forming enables users to manage inter-user interference better, and thus each user can increase the dimension of interference-free signal space for its own desired signals.
\end{abstract}
\begin{IEEEkeywords}
Degrees of freedom, hybrid beam-forming, interference alignment, interference channel
\end{IEEEkeywords}
 \IEEEpeerreviewmaketitle



\section{Introduction}
Mobile data traffic has been growing dramatically as the number of mobile smart devices is increasing rapidly in recent years~\cite{CISCO13}. To accommodate tremendous demand on mobile data traffic, the cell capacity can be largely increased by deploying a very large number of antennas at base stations (BSs), often referred to as a massive multiple-input multiple-output (MIMO) system~\cite{Marzetta:TWC10,LuLiSwindlehurstAshikhminZhang:JSTSP14}. The massive MIMO system, however, has hardware constrains that come from using a few hundred antennas. {For a conventional antenna array structure}, each antenna needs to have a dedicated RF chain. This naturally leads to an increment in the circuit size, power consumption, and device cost proportionally to the number of antennas, and hence it can be a serious problem in a practical point of view especially for massive MIMO systems. {Therefore, to resolve this problem}, a hybrid beam-forming structure with a lower number of RF chains than the number of antenna elements {has been recently introduced as a practical solution~\cite{ZhangMolischKung:TSP05,VenkateswaranVeen:TSP10}.}

As an alternative approach to increase the cell capacity, millimeter-wave (mmWave) communications have attracted great attention recently~\cite{RappaportSunMayzusZhaoAzarWangWongSchulzSamimiGutierrez:Access13}. The mmWave band from 30 to 300 GHz provides abundant contiguous frequency resources while frequency bands under 5 GHz used for legacy cellular communications are very crowded and fragmented. {The main advantage in mmWave communications is that} a very high data rate can be supported using a very large bandwidth at mmWave bands. {However}, one of major drawbacks is the high induced path loss due to the propagation loss and absorption loss at mmWave bands~\cite{MarcusPattan:MM05}. {Fortunately}, this high path loss can be effectively compensated by a high beam-forming gain obtained from a large number of antenna elements that can be packed into a small form factor due to the small wavelength in mmWave bands. {To support a single stream only, the analog beam-forming, which is simply implemented by controlling attenuators and phase shifters of the antenna array to steer a directional beam, is enough to be considered. However, to transmit multiple streams}, the hybrid beam-forming structure, where analog beam-forming is performed at RF domain and antenna arrays are connected to a relatively small number of digital paths, {should be} considered to get the multiplexing gain~\cite{ElAyachRajagopalAbu-SurraPiHeath:TWC14,AlkhateebMoGonzalez-PrelcicHeath:CM14}.

{As mentioned above}, the hybrid beam-forming architecture can play a key role in the next generation communications (e.g., massive MIMO and/or mmWave communications) {and hence has been widely studied recently}~\cite{ElAyachRajagopalAbu-SurraPiHeath:TWC14,AlkhateebMoGonzalez-PrelcicHeath:CM14,WuChiuLinChang:TWC13,AlkhateebElAyachLeusHeath:JSTP14,LiangXuDong:WCL14}. In~\cite{ElAyachRajagopalAbu-SurraPiHeath:TWC14}, precoders and combiners are designed using a sparse reconstruction approach. In~\cite{WuChiuLinChang:TWC13}, baseband and RF beams are designed for multiuser downlink spatial division multiple access (SDMA). In addition, a hybrid precoding algorithm based on a hierarchical codebook is proposed in~\cite{AlkhateebElAyachLeusHeath:JSTP14}. Furthermore, a hybrid precoder is proposed for massive multiuser MIMO systems in~\cite{LiangXuDong:WCL14}. While there are some works on hybrid beam-forming structures, however, to the best of our knowledge, the degrees of freedom (DoF) gain from hybrid beam-forming has not been analyzed before.

\subsection{Previous Works}
The DoF, which is also known as a capacity pre-log, gives the capacity approximation at high signal to noise ratio (SNR) regime. For example, for the point-to-point (PTP) channel with $M$ transmit antennas and $N$ receive antennas, it is well known that the capacity increases with the growth rate $\min\{M,N\}\log(\text{SNR})$ at high SNR~\cite{Foschini98,Telatar99}. Since exact capacity characterization is generally still unknown even for simple networks (e.g., two-user interference channel), instead of obtaining an exact capacity, approximate characterization by finding the optimal DoF has been studied in many networks recently~\cite{Jafar07,Cadambe107,Viveck1:09,Viveck2:09,Motahari:091,Suh08,Suh:11,Akbar10,Tiangao:10,Jeon2:11,Tiangao:12,Annapureddy:11,Ke:12,Jeon4:12,ChaeJ1,Chae11,Krishnamurthy12}.

Specifically, for the two-user interference channel, the sum DoF has been completely characterized, where zero-forcing precoding has been shown to be enough to achieve the optimal DoF~\cite{Jafar07}. For a general $K$-user interference channel, a novel interference management technique called \emph{interference alignment} has been proposed in~\cite{Cadambe107,Motahari:091}, which achieves the optimal sum DoF of $\frac{K}{2}$. Later this scheme has been extended to MIMO configurations both for rich scattering environment~\cite{Tiangao:10,Akbar10} and poor scattering  environment~\cite{ChaeJ1,Chae11,Krishnamurthy12}. Furthermore, beyond the interference channels, the idea of interference alignment has been successfully adapted to various networks, e.g., see~\cite{Viveck1:09,Viveck2:09,Motahari:091,Suh08,Suh:11,Jeon2:11,Tiangao:12,Annapureddy:11,Ke:12,Jeon4:12} and references therein.

\subsection{Contributions}
In this paper, our primary goal is to answer if hybrid beam-forming can increase the sum DoF of interference channels. To this end, motivated by the aforementioned previous works, we propose zero forcing and interference alignment schemes optimized for the hybrid beam-forming structure. In addition, we also derive a new upper bound on the sum DoF when hybrid beam-forming is employed at each node. For the two-user case, this upper bound coincides with the achievable sum DoF of the proposed scheme, thereby completely characterizing the sum DoF. For a general $K$-user case, our proposed scheme can achieve the upper bound when $\frac{\max\{M',N'\}}{\min\{M',N'\}}$ is an integer, where $M'$ and $N'$ denote the number of antennas at each transmitter and receiver, respectively. As a consequence of the result, we show that hybrid beam-forming can indeed improve the sum DoF of the $K$-user interference channel under certain conditions. This is in contrast to the PTP channels, multiple access channel (MAC), and broadcast channel (BC) cases in which hybrid beam-forming cannot increase the sum DoF (see Section III). The key insight behind this gain is that hybrid beam-forming enables users to manage interference better, and thus each user can increase the dimension of interference-free signal space which can be used for its own desired signals.

\subsection{Organization}
The rest of this paper is organized as follows. In Section II, we describe the system model and sum DoF metric considered in this paper. In Section III, we give an intuition as to how hybrid beam-forming can increase the sum DoF through motivating examples. In Section IV, we present and discuss about the main results of this paper. In addition, we provide numerical results which show the performance improvement from hybrid beam-forming in Section V. In Sections VI and VII, we provide the proofs of the main theorems. Finally, we conclude the paper in Section VIII.

\subsection{Notations}
Throughout the paper, we will use $\mathbf{A}$, $\mathbf{a}$, and $a$ to denote a matrix, a vector, and a scalar, respectively. For a rational number $a$, the notation $\lfloor a \rfloor$ denotes the integer part of $a$. For matrix $\mathbf{A}$, let $\mathbf{A}^{T}$, $\mathbf{A}^{\ast}$, and $||\mathbf{A}||$ denote the transpose, the complex conjugate transpose, and the norm of $\mathbf{A}$, respectively. In addition, let
$|\mathbf{A}|$ and $\text{rank}(\mathbf{A})$ denote the
determinant and the rank of $\mathbf{A}$, respectively. The
notations $\mathbf{I}_n$ and $\mathbf{0}_{n\times n}$ denote
the $n\times n$ identity matrix and zero matrix, respectively. We write $f(x)=o(x)$ if $\lim_{x\rightarrow
\infty}\frac{f(x)}{x}=0$.

\section{System Model}
Consider a $K$-user $(M_i, M_i')\times (N_i, N_i')$ interference channel with hybrid beam-forming, as shown in Fig.~\ref{channel model}.
Transmitter $i$ wishes to communicate with receiver $i$ only, while causing interference to all the other receivers. In addition, transmitter $i$ uses $M_i'$ antennas and $M_i$ RF chains and receiver $i$ uses $N_i'$ antennas and $N_i$ RF chains, where $M_i\leq M_i'$ and $N_i\leq N_i'$, $\forall i=1,2,\ldots,K$. Specially, when $M_i'=M_i$ and $N_i'=N_i$, $\forall i=1,2,\ldots,K$, we call the corresponding channel as a \emph{full digital} channel.

\begin{figure}[t!]
\begin{center}
\includegraphics[scale=0.43]{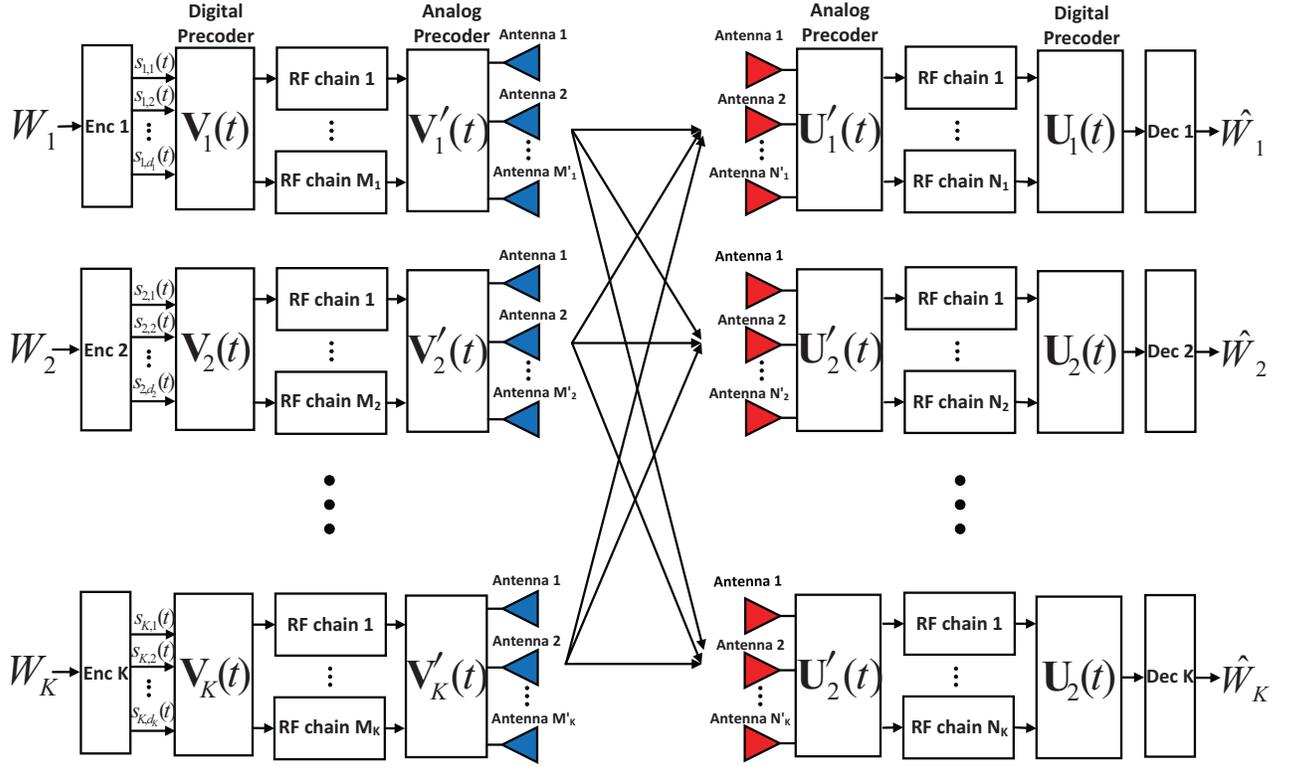}
\end{center}
\vspace{-0.1in}
\caption{The $K$-user interference channel with hybrid beam-forming}
\label{channel model}
\vspace{-0.1in}
\end{figure}

\subsection{Channel Model}
Similar to previous works~\cite{ElAyachRajagopalAbu-SurraPiHeath:TWC14,WuChiuLinChang:TWC13}, in this paper we assume that transmitter $i$ utilizes transmit hybrid beam-forming which consists of an $M_i'\times M_i$ analog precoder $\mathbf{V}_i'(t)$ and an $M_i\times d_i$ digital precoder $\mathbf{V}_i(t)$ as depicted in Fig.~\ref{channel model}, where $d_i\leq \{M_i,N_i\}$ denotes the number of streams of user $i$.\footnote{As compared to the hybrid beam-forming structure introduced in~\cite{ElAyachRajagopalAbu-SurraPiHeath:TWC14,WuChiuLinChang:TWC13}, in this paper, coefficients in $\mathbf{V}'_i(t)$ can have different norms by relaxing the constraint that all entries are of equal norm. In practical point of view, this is feasible since we can implement $\mathbf{V}'_i(t)$ by using both attenuators and analog phase shifters rather than using analog phase shifters only.} In addition, based on this hybrid beam-forming, the input signal of transmitter $i$ at time slot $t$, $\mathbf{x}_i(t)$, is assumed to be given by
\begin{align*}
\mathbf{x}_i(t)&=\mathbf{V}_i'(t)\mathbf{x}^{[b]}_i(t)\\
&=\mathbf{V}_i'(t)\mathbf{V}_i(t)\mathbf{s}_i(t),
\end{align*}
where $\mathbf{x}^{[b]}_i(t)=\mathbf{V}_i(t)\mathbf{s}_i(t)$ is the $M_i\times 1$ baseband-domain input vector and $$\mathbf{s}_i(t)=\left[\begin{array}{cccc}s_{i,1}(t)&\cdots&s_{i,d_i}(t)\end{array}\right]^T$$ is the $d_i\times 1$ symbol vector of transmitter $i$. Here, $s_{i,j}(t)$ denotes the $j$th symbol of user $i$ at time slot $t$. Then the input and output relationship at RF domain is given by
\begin{align*}
\mathbf{y}_{j}(t)&=\sum^{K}_{i=1}\mathbf{H}_{ji}(t)\mathbf{x}_{i}(t)+\mathbf{z}_{j}(t)\\
&=\sum^{K}_{i=1}\mathbf{H}_{ji}(t)\mathbf{V}_i'(t)\mathbf{x}^{[b]}_i(t)+\mathbf{z}_{j}(t)\\
&=\sum^{K}_{i=1}\mathbf{H}_{ji}(t)\mathbf{V}_i'(t)\mathbf{V}_i(t)\mathbf{s}_i(t)+\mathbf{z}_{j}(t),
\end{align*}
where $\mathbf{H}_{ji}(t)$ is the $N'_j\times M'_i$
channel matrix from transmitter $i$ to receiver $j$,
$\mathbf{y}_j(t)$ is the $N'_j\times 1$ RF-domain received signal vector at
receiver $j$, and $\mathbf{z}_j(t)$ is the Gaussian noise vector at receiver $j$ whose entries are drawn from $\mathcal{CN}(0,1)$. We assume that all channel coefficients are independent and identically distributed (i.i.d) from a continuous distribution and known to all nodes.

After receiving $\mathbf{y}_{j}(t)$, receiver $j$ applies receive hybrid beam-forming which consists of an analog precoder $\mathbf{U}_j'(t)$ and a digital precoder $\mathbf{U}_j(t)$ as shown in Fig.~\ref{channel model}. Specifically, by applying the analog precoder to the received signal at RF domain, we can obtain the input and output relationship at baseband domain as
\begin{align*}
\mathbf{y}_j^{[b]}(t)=\sum^{K}_{i=1}\mathbf{H}_{ji}^{[b]}(t) \mathbf{x}_i^{[b]}(t)+\mathbf{z}^{[b]}_j(t),
\end{align*}
where $\mathbf{y}_j^{[b]}(t)=\mathbf{U}'^{\ast}_j(t)\mathbf{y}_j(t)$, $\mathbf{H}^{[b]}_{ji}(t)=\mathbf{U}'^{\ast}_j(t)\mathbf{H}_{ji}(t)\mathbf{V}_i'(t)$, and $\mathbf{z}_j^{[b]}(t)=\mathbf{U}_j'^{\ast}(t)\mathbf{z}_j(t)$. If we further apply the digital precoder to the received signal at baseband domain, we finally get
\begin{align*}
\mathbf{y}_j^{[e]}(t)=\sum^{K}_{i=1}\mathbf{H}_{ji}^{[e]}(t) \mathbf{s}_i(t)+\mathbf{z}^{[e]}_j(t),
\end{align*}
where $\mathbf{y}_j^{[e]}(t)=\mathbf{U}^{\ast}_j(t)\mathbf{y}^{[b]}_j(t)=\mathbf{U}^{\ast}_j(t)\mathbf{U}'^{\ast}_j(t)\mathbf{y}_j(t)$, $\mathbf{H}^{[e]}_{ji}(t)=\mathbf{U}^{\ast}_j(t)\mathbf{U}'^{\ast}_j(t)\mathbf{H}_{ji}\mathbf{V}_i'(t)\mathbf{V}_i(t)$, and $\mathbf{z}_j^{[e]}(t)=\mathbf{U}^{\ast}_j(t)\mathbf{U}_j'^{\ast}(t)\mathbf{z}_j(t)$. Note that $\mathbf{H}^{[e]}_{ji}(t)$ is the effective channel matrix which can be obtained after applying transmit hybrid beam-forming of transmitter $i$ and receive hybrid beam-forming of receiver $j$.

Finally, by applying the aforementioned hybrid beam-forming strategy and assuming Gaussian signaling $\mathbf{s}_i(t)\sim \mathcal{CN}(\mathbf{0}_{d_i\times d_i},\frac{P}{d_i}\mathbf{I}_{d_i})$, the following average sum rate is achievable for a given transmit power $P$~\cite{LNIT}:
\begin{align}
R_{\text{sum}}(P)\leq E\left[\sum_{i=1}^{K} \log\frac{\left|\mathbf{A}_{i}(t)+\frac{P}{d_i}\sum_{j=1}^{K}\mathbf{H}^{[e]}_{ij}(t){\mathbf{H}^{[e]}_{ij}(t)}^{\ast}\right|}{\left|\mathbf{A}_{i}(t)+\frac{P}{d_i}\sum_{k=1,i\neq k}^{K}\mathbf{H}^{[e]}_{ik}(t){\mathbf{H}^{[e]}_{ik}(t)}^{\ast}\right|}\right],
\end{align}
where $\mathbf{A}_i(t)=E\left[\mathbf{z}_j^{[e]}(t){\mathbf{z}_j^{[e]}(t)}^\ast\right]=\mathbf{U}^{\ast}_j(t)\mathbf{U}_j'^{\ast}(t)\mathbf{U}_j'(t)\mathbf{U}_j(t)$. Specifically, when all the interferences are eliminated via hybrid beam-forming, i.e., $\mathbf{H}^{[e]}_{ij}(t)=\mathbf{0}$, $\forall i\neq j$ and $\forall t$, (1) becomes
\begin{align}
R_{\text{sum}}(P)&\leq E\left[\sum_{i=1}^{K} \log\frac{\left|\mathbf{A}_{i}(t)+\frac{P}{d_i}\sum_{j=1}^{K}\mathbf{H}^{[e]}_{ij}(t){\mathbf{H}^{[e]}_{ij}(t)}^{\ast}\right|}{\left|\mathbf{A}_{i}(t)\right|}\right]\\
&=\sum_{i=1}^{K}d_i\log(P)+o(\log(P)).
\end{align}

\subsection{Encoding, Decoding, and Sum DoF}

There are $K$ independent messages $W_1,W_2,\ldots, W_K$. For each transmitter $i$, a message $W_i$ is mapped to an $n$ length codeword $\left(\mathbf{x}_{i}(1), \ldots,\mathbf{x}_i(n)\right)$. To send the message $W_i$, at time $t$, transmitter $i$ sends $\mathbf{x}_{i}(t)$. Here, we assume that each transmitter should satisfy the average power constraint $P$, i.e.,
$E[|\mathbf{x}_i(t)|^2]\leq P$ for $i\in\{1,2,\ldots, K\}$. Then receiver $i$ decodes its desired message $\hat{W}_i$, based on its received signal.

A rate tuple
$(R_1, R_2, \ldots, R_K)$ is said to be achievable for the channel if there exists a sequence
of $(2^{nR_1},2^{nR_2}, \ldots ,2^{nR_K}, n)$ codes such that the average probability
of decoding error tends to zero as the code length $n$ goes to
infinity. The capacity region $\mathcal{C}$ of this channel is the
closure of the set of achievable rate tuples $(R_1, R_2, \ldots, R_K)$. The
sum DoF $\Gamma$, which is also known as a sum-capacity pre-log, provides the sum capacity approximation at high SNR as\footnote{In this paper, when we derive lower and upper bounds on the sum DoF, we restrict our attention on the cases in which the hybrid beam-forming structure introduced in Section II-A is used.}
$$C_{\text{sum}}(P)=\max_{(R_1,R_2,\ldots, R_K)\in \mathcal{C}}\sum_{i=1}^{K}R_i(P)=\Gamma \log(P)+o(\log(P)).$$
Equivalently, the sum DoF $\Gamma$ can be defined as $\Gamma=\lim_{P \to
\infty}\max_{(R_1,R_2,\ldots, R_K)\in \mathcal{C}}\frac{\sum_{i=1}^{K}R_i(P)}{\log(P)}$.

\section{Preliminary Discussion}
To gain insights into the DoF gain from hybrid beam-forming, we begin with examining PTP channel, MAC, and BC cases. Note that the PTP channel, the $K$-user MAC, and the $K$-user BC can be obtained from the $K$-user interference channel by allowing full cooperation among all the transmitters and among all the receivers, full cooperation among all the receivers only, and full cooperation among all the transmitters only, respectively. Here, we assume that hybrid beam-forming strategy (including digital precoder and analog precoder) for each channel is employed in a similar manner as in Section II.

\subsection{Point-to-Point (PTP) Channel}
Consider the $(M,M')\times (N,N')$ PTP channel in which the transmitter uses $M$ RF chains and $M'\geq M$ antennas and the receiver uses $N$ RF chains and $N'\geq N$ antennas. The DoF of this channel is stated in the following lemma.
\begin{lemma}\label{Lem:PTP}
For the $(M, M')\times (N, N')$ PTP channel with hybrid beam-forming, the DoF is given by $\Gamma_{\text{PTP}}=\min\{M,N\}$.
\end{lemma}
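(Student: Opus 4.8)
The plan is to establish $\Gamma_{\text{PTP}} = \min\{M,N\}$ by proving matching lower and upper bounds. The key observation is that the RF chains form a genuine bottleneck: even though the transmitter has $M'$ antennas and the receiver has $N'$ antennas, only $M$ and $N$ independent data streams can physically pass through the respective chains. I would first argue the upper bound $\Gamma_{\text{PTP}} \leq \min\{M,N\}$, then exhibit an achievable scheme meeting it.

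For the \emph{upper bound}, I would pass to the effective baseband-to-baseband channel $\mathbf{H}^{[e]}(t) = \mathbf{U}^{\ast}(t)\mathbf{U}'^{\ast}(t)\mathbf{H}(t)\mathbf{V}'(t)\mathbf{V}(t)$ and bound its rank. The crucial point is a rank inequality: for any matrices, $\text{rank}(\mathbf{ABC}) \leq \min\{\text{rank}(\mathbf{A}),\text{rank}(\mathbf{B}),\text{rank}(\mathbf{C})\}$. Since the analog precoder $\mathbf{V}'(t)$ is $M' \times M$ it has rank at most $M$, and the receive analog combiner $\mathbf{U}'(t)$ is $N' \times N$ so $\mathbf{U}'^{\ast}(t)$ has rank at most $N$. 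Therefore the effective channel seen from the symbol vector to the post-combining output has rank at most $\min\{M,N\}$, regardless of how large $M'$ and $N'$ are. Since the number of achievable DoF is at most the rank of this effective channel (the point-to-point MIMO DoF equals the rank of the channel matrix, as recalled in the excerpt for the full-digital case), this gives $\Gamma_{\text{PTP}} \leq \min\{M,N\}$.

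For the \emph{achievability}, I would construct explicit precoders achieving $\min\{M,N\}$ DoF. The simplest route is to choose the analog precoders to select a full-rank submatrix: pick $\mathbf{V}'(t)$ so that $\mathbf{H}(t)\mathbf{V}'(t)$ retains rank $\min\{M,N'\}$, and pick $\mathbf{U}'^{\ast}(t)$ so that the composite $\mathbf{U}'^{\ast}(t)\mathbf{H}(t)\mathbf{V}'(t)$ is an $N \times M$ matrix of rank $\min\{M,N\}$. Because the channel entries are drawn i.i.d.\ from a continuous distribution, generic (e.g.\ random) choices of the analog precoders yield a full-rank effective channel with probability one, and then the digital precoders $\mathbf{V}(t), \mathbf{U}(t)$ can be set to send and recover $\min\{M,N\}$ interference-free streams exactly as in the standard full-digital $M \times N$ MIMO argument. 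Invoking the achievable-rate expression~(3) in the excerpt, which yields $\sum_i d_i \log(P) + o(\log(P))$ with all cross terms absent, then gives $d = \min\{M,N\}$ DoF.

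I expect the main obstacle to be the careful justification of the achievability, namely verifying that a generic choice of analog precoders preserves the full rank $\min\{M,N\}$ of the effective channel rather than accidentally collapsing it. This amounts to showing that the relevant minors are non-vanishing polynomials in the channel and precoder entries, so that they are nonzero almost surely under the continuous i.i.d.\ assumption; the upper bound, by contrast, is an immediate consequence of the rank inequality and should require little work.
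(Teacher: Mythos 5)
Your proposal is correct and follows essentially the same route as the paper: the converse is the identical rank argument (the analog stages force the baseband channel $\mathbf{U}'^{\ast}(t)\mathbf{H}(t)\mathbf{V}'(t)$ to be at most $N\times M$, hence rank at most $\min\{M,N\}$), and the achievability reduces to a conventional full-digital $M\times N$ MIMO channel. The only cosmetic difference is in how that reduction is realized: the paper simply selects $M$ of the $M'$ transmit antennas and $N$ of the $N'$ receive antennas, so the effective channel is itself an i.i.d.\ continuous $N\times M$ matrix and the standard results of \cite{Foschini98,Telatar99} apply immediately, sidestepping the generic-precoder, non-vanishing-minor argument that you flag as the main remaining obstacle.
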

\begin{proof}
We first provide a converse proof. Following a similar way described in Section II, we can write the input and output relationship of the PTP channel at time slot $t$ as
\begin{align*}
\mathbf{y}(t)&=\mathbf{H}(t) \mathbf{x}(t)+\mathbf{z}(t)\\
&=\mathbf{H}(t)\mathbf{V}'(t)\mathbf{x}^{[b]}(t)+\mathbf{z}(t),
\end{align*}
where $\mathbf{y}(t)$ is the $N'\times 1$ RF-domain output vector at the receiver, $\mathbf{H}(t)$ is the $N'\times M'$ channel matrix from the transmitter to the receiver, $\mathbf{x}(t)$ and $\mathbf{x}^{[b]}(t)$ are the $M'\times 1$ RF-domain input vector and the $M\times 1$ baseband-domain input vector at the transmitter, respectively, $\mathbf{V}'(t)$ is the $M'\times M$ analog precoder of the transmitter, and $\mathbf{z}(t)$ is the $N'\times 1$ Gaussian noise vector.

Now focus on the input and output relationship at baseband domain. By applying receive analog precoding at the receiver, we can get
\begin{align*}
\mathbf{U}'^\ast(t)\mathbf{y}(t)=\mathbf{y}^{[b]}(t)=\mathbf{H}^{[b]}(t) \mathbf{x}^{[b]}(t)+\mathbf{z}^{[b]}(t),
\end{align*}
where $\mathbf{U}'(t)$ is the $N'\times N$ analog precoder of the receiver, $\mathbf{H}^{[b]}(t)=\mathbf{U'}^\ast(t)\mathbf{H}(t)\mathbf{V'}(t)$, and $\mathbf{z}^{[b]}(t)=\mathbf{U}'^\ast(t)\mathbf{z}(t)$. Since $\text{rank}(\mathbf{H}^{[b]}(t))\leq \min\{M,N\}$, we see that $\Gamma_{\text{PTP}}\leq \min\{M,N\}$.

For achievability, we only use $M$ transmit antennas out of $M'$ antennas of the transmitter and $N$ receive antennas out of $N'$ antennas of the receiver to equivalently create a conventional \emph{full digital} PTP channel with $M$ transmit antennas and $N$ receive antennas. Therefore, $\Gamma_{\text{PTP}}\geq \min\{M,N\}$ is achievable~\cite{Foschini98,Telatar99}, which completes the proof.
\end{proof}
\vspace{0.02in}

It is well known that the DoF of the full digital PTP channel with $M$ transmit antennas and $N$ receive antennas is equal to $\min\{M,N\}$~\cite{Foschini98,Telatar99}. Therefore, from the result of Lemma~\ref{Lem:PTP}, we see that adding more antennas only cannot increase the DoF of a PTP channel without increasing the number of RF chains, regardless of the values of $M'$ and $N'$.


\subsection{Multiple Access Channel (MAC) and Broadcast Channel (BC)}
Now we consider the $K$-user MAC and BC with hybrid beam-forming. For the MAC case, each transmitter $i$ uses $M_i$ RF chains and $M_i'\geq M$ antennas and the receiver uses $N$ RF chains and $N'\geq N$ antennas. For the BC case, the transmitter uses  $M$ RF chains and $M'\geq M$ antennas and each receiver uses $N_i$ RF chains and $N_i'\geq N_i$ antennas. The DoFs of these channels are stated in the following lemmas.

\begin{lemma}
For the $K$-user $(M_i, M_i')\times (N, N')$ multiple access channel (MAC) with hybrid beam-forming, the DoF is given by $\Gamma_{\text{MAC}}=\min\left\{\sum_{i=1}^{K} M_i,N\right\}$.
\end{lemma}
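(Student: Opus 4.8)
The plan is to prove the converse and the achievability separately, in both cases reducing the problem to the point-to-point result of Lemma~\ref{Lem:PTP}.

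For the converse, I would use a cooperation (genie) argument. Permitting full cooperation among the $K$ transmitters cannot decrease the sum DoF, so $\Gamma_{\text{MAC}}$ is upper bounded by the DoF of the resulting cooperative channel. When all transmitters cooperate, they form a single super-transmitter equipped with $\sum_{i=1}^{K} M_i'$ antennas and $\sum_{i=1}^{K} M_i$ RF chains, communicating with the receiver that has $N'$ antennas and $N$ RF chains. A cooperating super-transmitter may employ an arbitrary $\left(\sum_{i} M_i'\right)\times\left(\sum_{i} M_i\right)$ analog precoder, which is a strictly richer class than the per-user block-diagonal precoders $\mathbf{V}'_i(t)$ available in the MAC, so this channel is exactly a $\left(\sum_{i} M_i,\ \sum_{i} M_i'\right)\times (N,N')$ PTP channel with hybrid beam-forming. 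By Lemma~\ref{Lem:PTP}, its DoF equals $\min\left\{\sum_{i=1}^K M_i,\, N\right\}$, which yields $\Gamma_{\text{MAC}}\leq \min\left\{\sum_{i=1}^K M_i,\, N\right\}$.

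For achievability, I would deactivate the surplus antennas so as to reduce the hybrid channel to a conventional full digital MAC. At transmitter $i$ I would fix the analog precoder $\mathbf{V}'_i(t)$ to be a selection matrix connecting its $M_i$ RF chains to $M_i$ of its $M_i'$ antennas (a submatrix of the identity), and at the receiver I would fix $\mathbf{U}'(t)$ to select $N$ of the $N'$ receive antennas; the relaxation in the footnote permitting non-equal-norm analog coefficients makes such selection matrices admissible. This produces an effective full digital MAC in which transmitter $i$ has $M_i$ antennas and the receiver has $N$ antennas, whose effective channel matrices $\mathbf{H}_i^{[b]}(t)$ remain generic because the original coefficients are drawn i.i.d.\ from a continuous distribution. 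The sum DoF of such a MIMO MAC is the standard value $\min\left\{\sum_{i=1}^K M_i,\, N\right\}$, achievable by joint decoding at the receiver (equivalently, by viewing the transmitters' antennas as one $\sum_{i} M_i$-antenna array feeding a MIMO link), giving $\Gamma_{\text{MAC}}\geq \min\left\{\sum_{i=1}^K M_i,\, N\right\}$.

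I expect the main obstacle to be the achievability step, where I must justify that the reduced full digital MIMO MAC attains sum DoF $\min\left\{\sum_{i} M_i,\, N\right\}$. When $\sum_{i} M_i \leq N$ every stream can be separated by zero-forcing with no dimension loss, whereas when $\sum_{i} M_i > N$ one must argue that all $N$ receive dimensions can be loaded by selecting $N$ independent streams across the transmitters. Invoking genericity of the stacked channel matrix $\left[\mathbf{H}_1^{[b]}(t)\ \cdots\ \mathbf{H}_K^{[b]}(t)\right]$, which inherits full rank $\min\left\{\sum_{i} M_i,\, N\right\}$ from the continuous i.i.d.\ assumption, guarantees that the degrees of freedom are limited neither by rank deficiency nor by unintended alignment. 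The converse, by contrast, is immediate once the cooperation reduction to Lemma~\ref{Lem:PTP} is in place.
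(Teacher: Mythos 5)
Your proposal is correct and follows essentially the same route as the paper: the converse by letting all transmitters cooperate to form a $\bigl(\sum_{i} M_i, \sum_{i} M_i'\bigr)\times (N,N')$ PTP channel and invoking Lemma~1, and achievability by deactivating surplus antennas (selection-matrix analog precoders) to reduce to a conventional full digital MAC whose sum DoF $\min\bigl\{\sum_{i} M_i, N\bigr\}$ is standard. Your added remarks -- that the cooperative super-transmitter's precoder class subsumes the block-diagonal per-user precoders, and that genericity guarantees full rank of the reduced channel -- merely make explicit details the paper leaves implicit.
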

\begin{proof}
For a converse proof, we allow full cooperation among all the transmitters to form $\left(\sum_{i=1}^{K}M_i, \sum_{i=1}^{K}M_i'\right)\times (N, N')$ PTP channel. Then, from the result of Lemma 1, the sum DoF of this network is equal to $\min\left\{\sum_{i=1}^{K} M_i,N\right\}$. Since allowing cooperation does not reduce the capacity region~\cite{Viveck1:09}, this is an upper bound of the original network, and thus
\begin{align*}
\Gamma_{\text{MAC}}\leq \min\left\{\sum_{i=1}^{K} M_i,N\right\}.
\end{align*}

For achievability, we use only $M_i$ antennas out of $M'_i$ antennas of transmitter $i$, $\forall i=1,2,\ldots, K$, and $N$ antennas out of $N'$ antennas of the receiver to form a conventional full digital MAC in a similar manner as in Lemma 1. Then, $\Gamma_{\text{MAC}}\geq \min\left\{\sum_{i=1}^{K} M_i,N\right\}$ is achievable~\cite{Jafar07}, which completes the proof.
\end{proof}

\begin{lemma}
For the $K$-user $(M, M')\times (N_i, N_i')$ broadcast channel (BC) with hybrid beam-forming, the DoF is given by $\Gamma_{\text{BC}}=\min\left\{M,\sum_{i=1}^{K} N_i\right\}$.
\end{lemma}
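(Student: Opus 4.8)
The plan is to mirror the proof of Lemma~2, exploiting the duality that interchanges the roles of transmitters and receivers, and reducing everything to the point-to-point result of Lemma~1.

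For the converse, I would allow full cooperation among all $K$ receivers. Since the single transmitter already has $M$ RF chains and $M'$ antennas, pooling the receivers yields one receiver equipped with $\sum_{i=1}^{K} N_i$ RF chains and $\sum_{i=1}^{K} N_i'$ antennas, so the network collapses to an $(M, M')\times\left(\sum_{i=1}^{K} N_i,\ \sum_{i=1}^{K} N_i'\right)$ PTP channel with hybrid beam-forming. Invoking Lemma~1, the DoF of this enhanced channel equals $\min\left\{M,\ \sum_{i=1}^{K} N_i\right\}$. Because granting cooperation can only enlarge the capacity region~\cite{Viveck1:09}, this is a valid upper bound, giving $\Gamma_{\text{BC}}\le\min\left\{M,\ \sum_{i=1}^{K} N_i\right\}$.

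For achievability, I would switch off the surplus antennas: activate only $M$ of the $M'$ transmit antennas and only $N_i$ of the $N_i'$ antennas at receiver $i$, which reduces the hybrid system to a conventional \emph{full digital} BC with $M$ transmit antennas and $N_i$ antennas at each receiver $i$. The DoF of this full digital BC is the standard $\min\left\{M,\ \sum_{i=1}^{K} N_i\right\}$, attainable by linear precoding (equivalently, by dirty-paper coding via uplink--downlink duality)~\cite{Jafar07}. This matches the converse and completes the proof.

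Since Lemma~1 is already available, I do not expect a genuine obstacle here; the argument is essentially the dual of Lemma~2. The only point warranting care is verifying that the antenna-deactivation in the achievability step is consistent with the hybrid beam-forming constraint. To make this rigorous, I would take the analog precoders $\mathbf{V}'$ and $\mathbf{U}_i'$ to be column-selection matrices that pick out the active antennas, so that the effective baseband channel is exactly the intended $N_i\times M$ full digital sub-channel; this guarantees that the known full digital BC scheme can be realized within the hybrid architecture without loss.
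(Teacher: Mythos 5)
Your proposal is correct and follows essentially the same route as the paper, which proves this lemma by repeating the steps of Lemma~2 with full cooperation granted to the receivers instead of the transmitters for the converse, and by deactivating surplus antennas to reduce to a full digital BC for achievability. Your extra remark that the analog precoders can be taken as column-selection matrices is a valid (and consistent-with-the-model) way to make the antenna-deactivation step rigorous; the paper leaves this implicit.
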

\begin{proof}
We can easily prove Lemma 3 by following similar proof steps in Lemma 2 except the fact that we now allow full cooperation among all the receivers instead of transmitters for a converse proof. For brevity, we omit the rest of the proof steps.
\end{proof}
\vspace{0.02in}

From the results of Lemmas 2 and 3, adding more antennas only without more RF chains cannot increase the sum DoFs of MAC and BC, as in the PTP case.
Therefore, we can see that when full cooperation is already allowed at either transmitter side or receiver side of the $K$-user interference channel, hybrid beam-forming cannot further improve the DoF. However, as we will show in the following example, for the case in which full cooperation is not allowed so that there exist inter-user interferences, the sum DoF of an interference channel can be improved via hybrid beam-forming for certain cases.

\subsection{Interference Channel: Motivating Example}
Now we provide a simple example where hybrid beam-forming indeed improves the sum DoF. In the following example, we omit the time index $t$ for brevity.

\begin{example}
Consider the two-user $(2,4)\times (2,2)$ interference channel where $M_i=N_i=N_i'=2$ and $M_i'=4$, $\forall i=1,2$. We first set the $4\times 2$ analog precoder $\mathbf{V}'_i$ to satisfy $\mathbf{H}_{ji}\mathbf{V}'_i=\mathbf{0}$ for $i\neq j$ and $\text{rank}(\mathbf{H}_{ii}\mathbf{V}'_i)=2$. Since $\mathbf{H}_{ji}$ is the $2\times 4$ matrix and all channel coefficients are generic, we can easily find $\mathbf{V}_i'$ that satisfies these conditions. In addition, for the digital precoder of transmitter $i$, we set $\mathbf{V}_i=\mathbf{I}_2$, $\forall i=1,2$. Then, the received signal at each receiver $i$ is given by
\begin{align*}
\mathbf{y}_i&=\mathbf{H}_{ii}\mathbf{V}'_i\mathbf{V}_i\mathbf{s}_i+\mathbf{H}_{ij}\mathbf{V}'_j\mathbf{V}_j\mathbf{s}_j+\mathbf{z}_i\\
&=\mathbf{H}_{ii}\mathbf{V'}_i\mathbf{s}_i+\mathbf{z}_i,
\end{align*}
where $\mathbf{s}_i\sim \mathcal{CN}(\mathbf{0}_{2\times 2},\frac{P}{2}\mathbf{I}_2)$ is the transmitted symbol vector of user $i$ and $i\neq j$. Since $\text{rank}(\mathbf{H}_{ii}\mathbf{V}'_i)=2$, we can achieve $d_i=2$ for each user, thus achieving $\Gamma\geq 4$. Note that for the two-user full digital $(2,2)\times (2,2)$ interference channel, which has the same number of RF chains as in the two-user $(2,4)\times (2,2)$ interference channel, only the sum DoF of two can be achieved. This shows that for some cases, the sum DoF of an interference channel can actually be increased by adding more antennas only without increasing the number of RF chains.
\end{example}
\vspace{0.02in}
\begin{remark}
As shown in Example 1, by using more antennas, we can have a better ability to null out interferences from/to other users at RF domain. This enables users to secure more interference-free dimensions, and as a result, a higher sum DoF is achievable without any additional RF chains for some cases. However, despite this improved capability dealing with interferences, hybrid beam-forming does not always increase the DoF of an interference channel. For instance, as will be demonstrated in the next example, if all the interferences can be eliminated without the need to add more antennas, hybrid beam-forming cannot increase the sum DoF.
\end{remark}
\begin{example}
Consider the two-user $(1,2)\times (2,4)$ interference channel where $M_i=1$, $M'_i=2$, $N_i=2$, and $N'_i=4$, $\forall i=1,2$. By allowing full cooperation among transmitters and among receivers, we can get the $(2,4)\times (4,8)$ PTP channel. Since the DoF of this channel is given by two from Lemma 1 and allowing full cooperation does not reduce the capacity region, the sum DoF of the two-user $(1,2)\times (2,4)$ interference channel cannot be more than two. Note that the two-user full digital $(1,1)\times (2,2)$ interference channel can also achieve the sum DoF of two~\cite{Jafar07}. Therefore, unlike in Example 1, adding antennas only cannot increase the sum DoF in this case. In fact, in this case, to achieve a higher DoF, we need to use more RF chains as well as more antennas. For example, if we use additional one RF chain and two RF chains at each transmitter and receiver, respectively, the channel becomes the two-user full digital $(2,2)\times (4,4)$ interference channel, and we can now achieve the improved DoF of 4.
\end{example}

\section{Main Results}
In this section, we state and discuss about the main results of this paper. For the two-user case, the sum DoF is completely characterized for any antenna configuration. When $K\geq 3$, we focus on a symmetric case where $M_i=M$, $N_i=N$, $M_i'=M'$, and $N'_i=N'$, $\forall i=1,2,\ldots,K$, and derive lower and upper bounds on the sum DoF. It is shown that two bounds are matched under a certain condition.

\subsection{Two-user Case}

For the two-user interference channel, we completely characterize the sum DoF as stated in the following theorem.

\begin{theorem}[Two-user case]
For the two-user $(M_i, M_i')\times (N_i, N_i')$ interference channel with hybrid beam-forming, the sum DoF $\Gamma$ is given by
\begin{align*}
\Gamma=\min\{M_1+M_2, N_1+N_2, M_1+N_2, M_2+N_1, \max\{M_1',N_2'\}, \max\{M_2',N_1'\}\}
\end{align*}
where $M_i\leq M_i'$ and $N_i\leq N_i'$ for $i=1,2$.
\end{theorem}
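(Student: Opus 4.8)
The plan is to prove the two inequalities $\Gamma \le \Gamma^*$ (converse) and $\Gamma \ge \Gamma^*$ (achievability) separately, where $\Gamma^*$ denotes the six-way minimum in the statement. A useful preliminary observation is that the first four terms collapse: since minimizing a sum of two independent choices equals the sum of the individual minima,
\begin{align*}
\min\{M_1+M_2,\, N_1+N_2,\, M_1+N_2,\, M_2+N_1\} = \min\{M_1,N_1\}+\min\{M_2,N_2\},
\end{align*}
so that $\Gamma^* = \min\{\min\{M_1,N_1\}+\min\{M_2,N_2\},\ \max\{M_1',N_2'\},\ \max\{M_2',N_1'\}\}$. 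This reduces the task to three essential bounds.

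For the converse I would use two independent ingredients. First, to obtain the four ``sum'' terms it suffices to show the per-user bound $d_i \le \min\{M_i,N_i\}$: providing the competing message to a genie lets receiver $i$ cancel the inter-user interference, reducing user $i$ to the $(M_i,M_i')\times(N_i,N_i')$ point-to-point channel, whose DoF is $\min\{M_i,N_i\}$ by Lemma~\ref{Lem:PTP}; summing over $i$ and applying the identity above yields the first four bounds. Second, to obtain the two ``max'' terms, I would upgrade each node from $M_i$ (resp. $N_i$) RF chains to $M_i'$ (resp. $N_i'$) RF chains, i.e.\ make the system full digital at the antenna level. Since any hybrid precoder $\mathbf{V}_i'(t)\mathbf{V}_i(t)$ is a rank-$\le M_i$ special case of a full-digital precoder, this can only enlarge the capacity region, and the resulting channel is the full-digital two-user MIMO interference channel with antennas $(M_i',N_i')$. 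Its sum DoF is $\min\{M_1'+M_2',N_1'+N_2',\max\{M_1',N_2'\},\max\{M_2',N_1'\}\}$ by the known result~\cite{Jafar07}, which in particular gives $\Gamma \le \max\{M_1',N_2'\}$ and $\Gamma \le \max\{M_2',N_1'\}$. Combining the two ingredients gives $\Gamma \le \Gamma^*$.

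For the achievability I would design the analog precoders $\mathbf{V}_i',\mathbf{U}_j'$ and digital precoders $\mathbf{V}_i,\mathbf{U}_j$ to realize a hybrid zero-forcing scheme, as suggested by Example~1, targeting $d_i$ streams with $d_1+d_2=\Gamma^*$. The central mechanism is that the analog precoder lets transmitter $i$ confine its $M_i$-dimensional signal subspace to (part of) the null space of the cross channel $\mathbf{H}_{ji}$ --- available whenever $M_i'>N_j'$ --- thereby silencing interference at receiver $j$ already in the RF domain, while the analog combiner lets receiver $j$ discard the residual interference using its receive-antenna surplus $N_j'-(\text{interference dimension})$. Genericity of the channel matrices guarantees that directions placed in a cross-channel null space still reach the intended receiver with full rank, so the nulled directions carry useful signal. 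I would organize the construction by which term of $\Gamma^*$ is active: when $\min\{M_1,N_1\}+\min\{M_2,N_2\}$ binds, the antenna surplus is large enough for combined transmit/receive zero-forcing to deliver each user its full point-to-point DoF $\min\{M_i,N_i\}$; when a $\max$ term binds, the antennas are the bottleneck, and I would choose an appropriate allocation of $d_1,d_2$ together with a split of the nulling between transmit and receive sides so that the total interference-free dimension equals $\max\{M_i',N_j'\}$. Throughout, $d_i\le\min\{M_i,N_i\}$ keeps the RF-chain constraints satisfied, and invoking the symmetries of the formula (swapping the two users, and the transmit/receive duality $M\leftrightarrow N$, $M'\leftrightarrow N'$) cuts down the number of cases to be checked explicitly.

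The main obstacle is the achievability construction and its case analysis: I must exhibit, in every parameter regime, explicit subspaces such that the desired channels retain rank $d_i$ while all cross interference is annihilated (or confined to dimensions the receiver can null), and verify by dimension counting that the achieved $d_1+d_2$ matches the six-way minimum exactly --- in particular in the regimes where the antenna-level $\max$ terms are active and single-sided zero-forcing does not suffice, forcing a carefully balanced combination of transmit-side and receive-side nulling. The converse, by contrast, is comparatively routine given Lemma~\ref{Lem:PTP} and the known full-digital MIMO interference channel result.
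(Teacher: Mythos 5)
Your proposal follows essentially the same route as the paper in both directions: the converse pairs the per-user point-to-point bound from Lemma~\ref{Lem:PTP} (yielding the four sum terms via $\min\{M_1,N_1\}+\min\{M_2,N_2\}$) with an upgrade to the full-digital $(M_1',M_2')\times(N_1',N_2')$ channel and the known result of~\cite{Jafar07} (yielding the two max terms), and the achievability is the same hybrid zero-forcing scheme in which transmit analog beams are split into cross-channel-nulled directions (at most $\max(0,M_i'-N_j')$ of them, exactly the paper's $\mathbf{V}'_{ii}$) and generic directions that the receiver's analog combiner zero-forces (the paper's $\mathbf{V}'_{i0}$, under the constraint $d_i+d_{j0}\leq N_i'$). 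The only cosmetic difference is bookkeeping: the paper closes the dimension count by applying Fourier--Motzkin elimination to the feasibility constraints (4)--(9) on $(d_1,d_{11},d_{10},d_2,d_{22},d_{20})$, which systematically produces the six-way minimum, whereas you propose a case-by-case analysis of which term of the minimum is active; the underlying construction and counting are identical.
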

\begin{proof}
See Section VI for the proof.
\end{proof}
\vspace{0.02in}

\begin{remark}
For the case where $M_i'=M_i$ and $N_i'=N_i$, $\forall i=1,2$, the sum DoF becomes
\begin{align*}
\Gamma=\min\{M_1+M_2, N_1+N_2,\max\{M_1,N_2\}, \max\{M_2,N_1\}\},
\end{align*}
which recovers the result for the two-user full digital interference channel in~\cite{Jafar07}.
\end{remark}

\begin{figure}[t!]
\begin{center}
\includegraphics[scale=0.8]{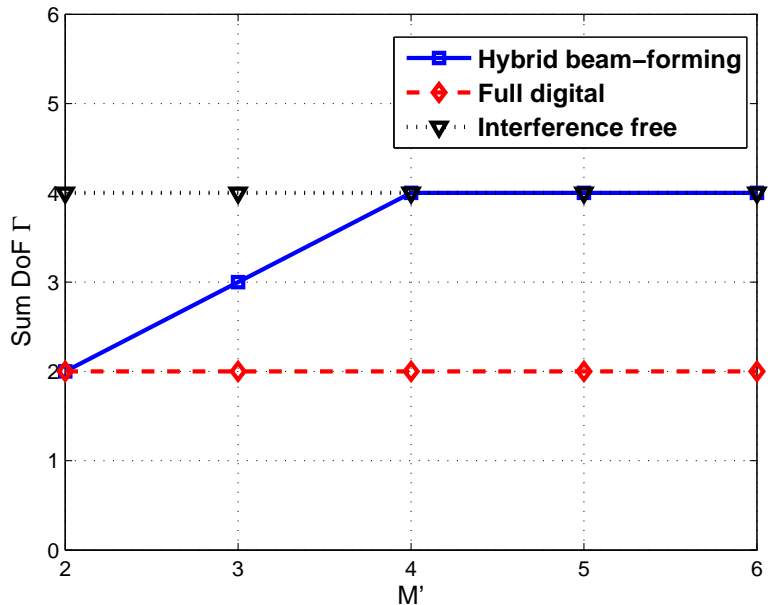}
\end{center}
\vspace{-0.15in}
\caption{Sum DoFs of the two-user case with respect to $M'$ when $M_1=M_2=N_1=N_2=M=2$ and $M_1'=M_2'=N_1'=N_2'=M'$.}
\label{example1}
\vspace{-0.1in}
\end{figure}

\begin{remark}
Note that when the condition
$$\min\{\max\{M'_1,N'_2\}, \max\{M'_2,N'_1\}\}\geq \min\{M_1+M_2, N_1+N_2, M_1+N_2, M_2+N_1\}$$
is satisfied, the sum DoF becomes
\begin{align*}
\Gamma&=\min\{M_1+M_2, N_1+N_2, N_1+M_2, M_2+N_1\}\\
&=\min\{M_1, N_1\}+\min\{M_2,N_2\},
\end{align*}
which is the sum DoF of the interference-free channel. Therefore, we can see that by adding enough number of antennas at each node, all the users can utilize their full DoFs as if there is no interference.
\end{remark}
\vspace{0.05in}

\textbf{DoF gain due to hybrid beam-forming:} Consider a symmetric case where $M_1=M_2=N_1=N_2=M=2$ and $M_1'=M_2'=N_1'=N_2'=M'$. We plot the sum DoF as a function of $M'$ with fixed $M$ in Fig.~\ref{example1}. For comparison, we also plot the sum DoF of the full digital case where the number of RF chains is the same as the hybrid beam-forming case. As can be seen in Fig.~\ref{example1}, although we add antennas only, we can achieve a higher DoF and it reaches up to the maximum value of $2M$, the sum DoF of the interference-free channel, when $M'=2M$. The gain comes from the fact that hybrid beam-forming can null out more interferences without increasing the number of RF chains, as well as enhancing the capacity of PTP channel as reported in~\cite{ElAyachRajagopalAbu-SurraPiHeath:TWC14,WuChiuLinChang:TWC13}.

\subsection{$K$-user Case}
When $K\geq 3$, we focus on a symmetric case where $M_i=M$, $N_i=N$, $M_i'=M'$, and $N'_i=N'$, $\forall i=1,2,\ldots,K$.
Under this configuration, we obtain lower and upper bounds on the sum DoF as stated in the following theorem, which are tight when $\frac{\max\{M',N'\}}{\min\{M',N'\}}$ is an integer.

\begin{theorem}[$K$-user case]
For the symmetric $K$-user $(M, M')\times (N, N')$ interference channel with hybrid beam-forming, the following sum DoF $\Gamma$ is achievable:
\begin{align*}
\Gamma\geq \left\{\begin{array}{ll} K\min\{M,N\}& \textrm{if $K\leq R $}, \\
K\min \left\{M,N,\frac{R}{R+1}\min\{M',N'\}\right\}& \textrm{if $K> R$,}
\end{array} \right.
\end{align*}
where $R=\left\lfloor\frac{\max\{M',N'\}}{\min\{M',N'\}}\right\rfloor$. For converse, the sum DoF $\Gamma$ is upper bounded by
\begin{align*}
\Gamma\leq \left\{\begin{array}{ll} K\min\{M,N\}& \textrm{if $K\leq R $}, \\
K\min\left\{M,N,\frac{\max\{M',N'\}}{R+1}\right\}& \textrm{if $K> R$.}
\end{array} \right.
\end{align*}
\begin{proof}
See Section VII for the proof.
\end{proof}
\end{theorem}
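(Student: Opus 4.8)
The plan is to handle the achievability and converse separately, treating the symmetric channel through the ratio $R=\left\lfloor\frac{\max\{M',N'\}}{\min\{M',N'\}}\right\rfloor$, which measures how many orthogonal interference-nulling dimensions each node's extra antennas can buy. By symmetry I would assume without loss of generality that $M'\le N'$ (the case $N'\le M'$ follows by the transmit--receive reciprocity of DoF), so that $R=\left\lfloor N'/M'\right\rfloor$, $\min\{M',N'\}=M'$, and $\max\{M',N'\}=N'$.

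For achievability, the idea is to mimic Example~1 but now balance interference nulling at the receivers against the limited RF-chain budget. First I would note the trivial regime: when $K\le R$, each receiver has $N'\ge KM'\ge (K-1)M'$ antennas, enough to zero-force all $K-1$ interferers in the RF/analog domain using receive analog precoders $\mathbf{U}'_j$ while preserving a rank-$\min\{M,N\}$ desired channel, so each user attains its full $\min\{M,N\}$ DoF and $\Gamma\ge K\min\{M,N\}$. For the harder regime $K>R$, I would design the scheme so that each transmitter sends $d$ streams through an analog precoder $\mathbf{V}'_i$ chosen to align or suppress its interference footprint, and each receiver uses $\mathbf{U}'_j$ to cancel as much interference as its $N'$ antennas allow. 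The key counting: at each receiver the desired signal occupies $d$ dimensions and the $K-1$ interferers occupy up to $(K-1)d$ dimensions, and these must fit after analog combining into a space where interference can be nulled; balancing the antenna resource across transmit-side nulling and receive-side nulling yields the per-user rate $\min\{M,N,\frac{R}{R+1}\min\{M',N'\}\}$. I would verify that with generic channel coefficients the required analog precoders exist (a dimension-counting / genericity argument as in the two-user motivating example), giving the stated lower bound.

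For the converse, I would lean on the cooperation bound already used in Lemmas~2 and~3. The bounds $K\min\{M,N\}$ and $K\min\{M',N'\}$ come from giving each user its interference-free maximum and from the PTP rank bound of Lemma~1. The nontrivial term $\frac{\max\{M',N'\}}{R+1}$ is where the real work lies: I would group the users and allow partial cooperation among a suitable subset, forming a compound MAC/BC-type channel whose effective antenna count is controlled by $\max\{M',N'\}$, then invoke Lemma~1 (or Lemmas~2--3) on the cooperative channel to cap the sum DoF of that subset. Summing or averaging these subset bounds over all $K$ users — with the $R+1$ in the denominator arising because a block of $R+1$ users cannot be simultaneously separated by the available $\max\{M',N'\}$ antennas — should produce $\Gamma\le K\min\{M,N,\frac{\max\{M',N'\}}{R+1}\}$.

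The main obstacle I anticipate is the converse term $\frac{\max\{M',N'\}}{R+1}$: choosing the right cooperating subsets and the right weighting so that the genie-aided bound collapses to exactly this quantity is delicate, and it is precisely the gap between $\frac{R}{R+1}\min\{M',N'\}$ (achievable) and $\frac{\max\{M',N'\}}{R+1}=\frac{1}{R+1}\max\{M',N'\}$ (converse) that explains why the two bounds coincide only when $R=\frac{\max\{M',N'\}}{\min\{M',N'\}}$ is an integer, since then $\frac{R}{R+1}\min\{M',N'\}=\frac{\max\{M',N'\}}{R+1}$. Establishing the achievability precoder existence under the tight antenna budget and matching it to this converse is the crux of the argument.
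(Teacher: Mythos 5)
Your reciprocity reduction, your zero-forcing argument for $K\le R$, and your observation that the two bounds coincide exactly when $\frac{\max\{M',N'\}}{\min\{M',N'\}}$ is an integer all match the paper. But both halves of the hard regime $K>R$ have genuine gaps.

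On achievability for $K>R$: a single-slot scheme verified by ``dimension-counting / genericity,'' as you propose, cannot reach $\frac{R}{R+1}\min\{M',N'\}$ per user. In one slot, unaligned interference from $K-1$ transmitters occupies up to $(K-1)d$ dimensions at a receiver with at most $N'$ usable dimensions, forcing $d+(K-1)d\le N'$, i.e.\ $d\le N'/K$, which vanishes as $K$ grows instead of staying at the constant $\frac{RM'}{R+1}$; and spatial alignment with finitely many generic antennas cannot close this gap. The paper's scheme is necessarily asymptotic: it discards $N'-RM'$ antennas at each receiver, takes $T=(R+1)(n+1)^p$ symbol extensions with $p=M'KR(M'K-R-1)$, and --- this is the key idea absent from your proposal --- treats each transmitter's $M'$ antennas as $M'$ virtual single-antenna users, importing the interference-alignment beamforming coefficients of the $KM'$-user full-digital $(1,1)\times(R,R)$ SIMO interference channel (Lemma 4, from \cite{Tiangao:10}). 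The claimed per-user DoF is attained only as a supremum over $n$, so without symbol extension and this virtual-user decomposition the lower bound is not established.

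On the converse for $K>R$: you plan to invoke Lemma 1 (or Lemmas 2--3) on cooperative subsets, but all three of those bounds depend only on the RF-chain counts ($\min\{M,N\}$ for PTP, $\min\{\sum_i M_i,N\}$ for MAC, $\min\{M,\sum_i N_i\}$ for BC) and never on the antenna counts $M',N'$, so no combination of them can produce the term $\frac{\max\{M',N'\}}{R+1}$ --- precisely the term you flag as ``where the real work lies.'' The paper's route is different: within any group of $R+1$ users, let $R$ transmitters and the corresponding $R$ receivers fully cooperate, forming a \emph{two-user} hybrid interference channel with parameters $(RM,RM')\times(RN,RN')$ and $(M,M')\times(N,N')$, and then invoke Theorem 1, whose converse terms $\max\{M_1',N_2'\}$ and $\max\{M_2',N_1'\}$ are exactly where antenna counts enter; with $M'\le N'$ one has $RM'\le N'$, so these collapse to $\max\{M',N'\}$, giving $d_{i_1}+\cdots+d_{i_{R+1}}\le\min\{(R+1)M,(R+1)N,\max\{M',N'\}\}$, and summing over all $(R+1)$-subsets yields the bound. (Theorem 1 in turn gets its antenna-dependent terms by adding RF chains to reach the full-digital two-user channel of \cite{Jafar07}.) Your block-of-$(R+1)$-users intuition is the right skeleton, but the reduction to the two-user hybrid result, rather than to the PTP/MAC/BC lemmas, is the missing ingredient that actually produces $\frac{\max\{M',N'\}}{R+1}$.
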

\vspace{0.02in}

\begin{remark}
Similar to the two-user case explained in Remark 2, for the case where $M'=M$ and $N'=N$, Theorem 2 recovers the result for the $K$-user full digital interference channel in~\cite{Tiangao:10}.
\end{remark}

\begin{remark}
It is easy to see that $\frac{R}{R+1}\min\{M',N'\}$ is a non-decreasing function of $M'$ and $N'$. Intuitively, this is clear since having more antennas does not reduce the capacity region. Moreover, when $\frac{R}{R+1}\min\{M',N'\}\geq \min\{M,N\},$ each user can achieve the maximum DoF of $\min\{M,N\}$ as if there is no interference.
\end{remark}

\begin{cor}
By employing hybrid beam-forming, we can get at most \emph{two-fold} DoF gain as compared to the full digital case in which the number of RF chains is the same as the hybrid beam-forming case.
\begin{proof}
Let $\Gamma_h$ and $\Gamma_f$ denote the sum DoFs with hybrid beam-forming and full digital structures, respectively. For the two-user case, we have
\begin{align*}
\frac{\Gamma_h}{\Gamma_f}&=\frac{\min\{M_1+M_2, N_1+N_2, M_1+N_2, M_2+N_1, \max\{M_1',N_2'\}, \max\{M_2',N_1'\}\}}{\min\{M_1+M_2, N_1+N_2,\max\{M_1,N_2\}, \max\{M_2,N_1\}\}}\\
&\leq \frac{\min\{M_1,N_1\}+\min\{M_2,N_2\}}{\min\{M_1+M_2, N_1+N_2,\max\{M_1,N_2\}, \max\{M_2,N_1\}\}}\\
&\leq \frac{\min\{M_1,N_1\}+\min\{M_2,N_2\}}{\max\{\min\{M_1, N_1\},\min\{M_2,N_2\}\}}\\
&\leq \frac{2\max\{\min\{M_1,N_1\},\min\{M_2,N_2\}\}}{\max\{\min\{M_1, N_1\},\min\{M_2,N_2\}\}}\\
&=2
\end{align*}
In addition, for the general $K$-user case, we have
\begin{align*}
\frac{\Gamma_h}{\Gamma_f}&\leq \frac{K\min\{M,N\}}{\frac{KL}{L+1}\min\{M,N\}}\\
&=\frac{L+1}{L}\\
&\leq 2,
\end{align*}
where $L=\left\lfloor\frac{\max\{M,N\}}{\min\{M,N\}}\right\rfloor$. This completes the proof.
\end{proof}
\end{cor}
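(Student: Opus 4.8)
The plan is to bound the ratio $\Gamma_h/\Gamma_f$ by sandwiching it between a universal upper bound on the hybrid sum DoF $\Gamma_h$ and a matching lower bound on the full-digital sum DoF $\Gamma_f$ (the full-digital channel being the one with $M_i'=M_i$ and $N_i'=N_i$, i.e., the same RF-chain count). I would treat the two-user and the general $K$-user cases separately, since the relevant closed forms come from Theorem 1 and Theorem 2 respectively.

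For the two-user case, the first step is to show that $\Gamma_h$ never exceeds the interference-free sum DoF,
\begin{align*}
\Gamma_h \leq \min\{M_1,N_1\} + \min\{M_2,N_2\}.
\end{align*}
This follows because the expression for $\Gamma_h$ in Theorem 1 is a minimum that includes the four terms $M_1+M_2$, $N_1+N_2$, $M_1+N_2$, and $M_2+N_1$, and a short case split on the signs of $M_i-N_i$ shows that the minimum of exactly these four equals $\min\{M_1,N_1\}+\min\{M_2,N_2\}$. The second step is to lower-bound the full-digital DoF from Remark 2,
\begin{align*}
\Gamma_f = \min\{M_1+M_2,\, N_1+N_2,\, \max\{M_1,N_2\},\, \max\{M_2,N_1\}\} \geq \max\{\min\{M_1,N_1\}, \min\{M_2,N_2\}\},
\end{align*}
which I would verify entry by entry: each of the four terms dominates both $\min\{M_1,N_1\}$ and $\min\{M_2,N_2\}$ (for instance $\max\{M_1,N_2\}\geq M_1 \geq \min\{M_1,N_1\}$ and $\max\{M_1,N_2\}\geq N_2\geq\min\{M_2,N_2\}$, so it dominates their maximum). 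Combining the two steps with the elementary inequality $a+b\leq 2\max\{a,b\}$ yields $\Gamma_h/\Gamma_f\leq 2$.

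For the general $K$-user case, I would simply invoke the bounds already established in Theorem 2. The converse there gives $\Gamma_h\leq K\min\{M,N\}$ unconditionally, while specializing the achievability bound to the full-digital configuration $M'=M$, $N'=N$ (so that $R=L:=\lfloor\max\{M,N\}/\min\{M,N\}\rfloor$) gives $\Gamma_f\geq \frac{L}{L+1}K\min\{M,N\}$ in the worst case $K>L$, and $\Gamma_f\geq K\min\{M,N\}$ when $K\leq L$. Dividing yields $\Gamma_h/\Gamma_f\leq \frac{L+1}{L}\leq 2$, the final inequality holding because $L\geq 1$.

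Since the heavy lifting is done by Theorems 1 and 2, the only genuine work is the first inequality of the two-user argument, namely recognizing that the four antenna-sum terms in Theorem 1 collapse to the interference-free DoF $\min\{M_1,N_1\}+\min\{M_2,N_2\}$. I expect this small case analysis (on whether $M_i\lessgtr N_i$) to be the main, though modest, obstacle; everything else is a direct substitution into the established formulas followed by the bound $a+b\leq 2\max\{a,b\}$. The conceptual takeaway driving the whole proof is that hybrid beam-forming can at best reach interference-free performance, while the full-digital scheme with the same RF chains already achieves at least half of it.
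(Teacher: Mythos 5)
Your proposal is correct and follows essentially the same route as the paper: for two users, bound $\Gamma_h$ above by the interference-free DoF $\min\{M_1,N_1\}+\min\{M_2,N_2\}$, bound $\Gamma_f$ below by $\max\{\min\{M_1,N_1\},\min\{M_2,N_2\}\}$, and apply $a+b\leq 2\max\{a,b\}$; for $K$ users, combine the converse $\Gamma_h\leq K\min\{M,N\}$ with the full-digital achievability $\Gamma_f\geq \frac{L}{L+1}K\min\{M,N\}$ to get the ratio $\frac{L+1}{L}\leq 2$. Your explicit handling of the $K\leq L$ case and the entry-by-entry check of the lower bound on $\Gamma_f$ only make explicit steps the paper leaves implicit.
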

\vspace{0.05in}

\begin{figure}[t!]
\begin{center}
\includegraphics[scale=0.8]{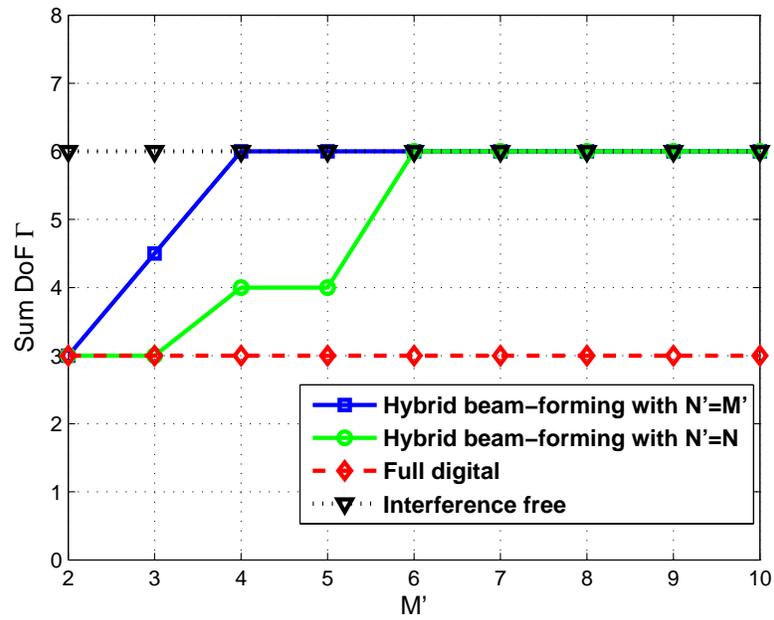}
\end{center}
\vspace{-0.15in}
\caption{Sum DoFs of the three-user cases with respect to $M'$ when $M=N=2$.}
\label{example2}
\vspace{-0.1in}
\end{figure}

\vspace{0.02in}
\textbf{DoF gain due to hybrid beam-forming:} Consider the three-user case where $M=N=2$. First, we set $N'=M'$ and plot the sum DoF as a function of $M'$ with fixed $M$ and $N$ in Fig.~\ref{example2}. In addition, we consider another scenario in which additional antennas are employed only at transmitters, i.e., $N'=N$, and again plot the sum DoF as a function of $M'$. As can be seen in the figure, by using hybrid beam-forming, we can achieve a higher DoF and interestingly, it can reach up to the maximum DoF of six even when hybrid beam-forming is applied at transmitters only. Furthermore, note that when achieving this DoF, interference alignment combined with hybrid beam-forming is employed. From this point, we can see that hybrid beam-forming can provide an improved capability not only nulling out interferences but also aligning interferences at RF domain.

\begin{figure}[t!]
\begin{center}
\includegraphics[scale=0.8]{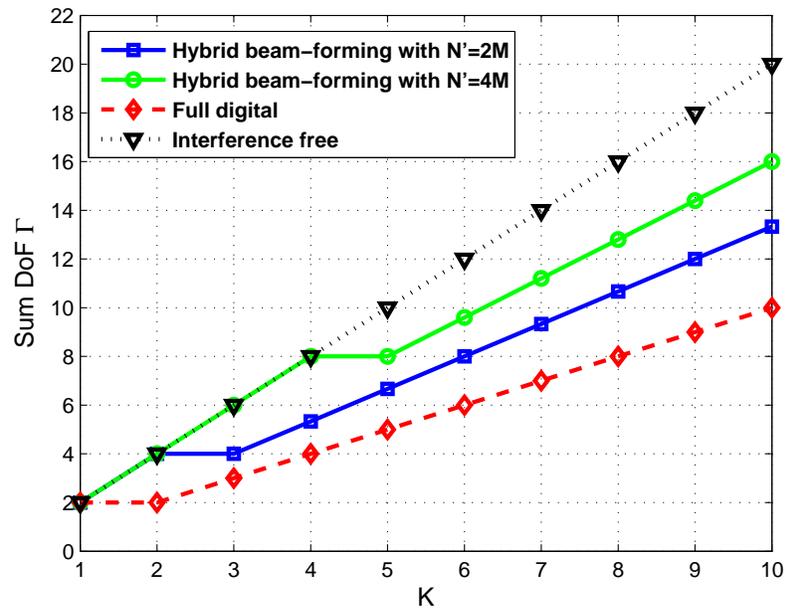}
\end{center}
\vspace{-0.15in}
\caption{Sum DoFs with respect to $K$ when $M=N=M'=2$.}
\label{example3}
\vspace{-0.1in}
\end{figure}

Now, we examine a tendency of the sum DoF with respect to $K$ with the fixed number of antennas and RF chains at each node. Specifically, we set $M=N=M'=2$ and plot the sum DoFs when $N'=2M$ and $N'=4M$ in Fig.~\ref{example3}. For comparison, we also plot the sum DoF of the full digital case where the number of RF chains is the same as the hybrid beam-forming case. From Fig.~\ref{example3}, we see that hybrid beam-forming can improve the sum DoF for all values of $K$, and moreover, the slope also increases as the number of additional antennas at each receiver increases.

\section{Numerical Simulation}
In this section, we numerically evaluate the average sum rate performance of the proposed hybrid beam-forming schemes for $K=2$ and $3$ cases to show that the sum DoFs stated in Theorems 1 and 2 are indeed achievable. For comparison, the sum DoFs of the full digital and the interference-free cases are also plotted. Here, we assume Rayleigh fading environment where each channel coefficient is drawn i.i.d from $\mathcal{CN}(0,1)$. In addition, we assume that all the noise power is normalized to unity and thus $\textrm{SNR}=P$. Furthermore, to clearly capture the sum DoFs from the sum-rate graphs, we plot the average sum rates as a function of $\log_2(\textrm{SNR})$.

\begin{figure}[t!]
\begin{center}
\includegraphics[width=0.7\textwidth]{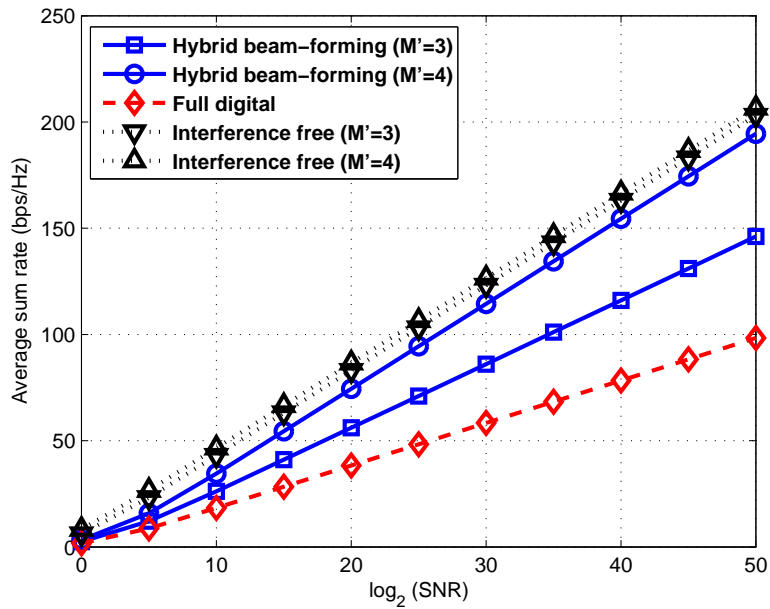}
\end{center}
\caption{Average sum rates of the two-user case when $M=N=2$ and $N'=M'$.}
\label{Fig:SumRate-TwoUser}
\end{figure}

\subsection{Average Sum Rate for the Two-user Case}
In Fig.~\ref{Fig:SumRate-TwoUser}, the average sum rates are plotted as a function of $\log_2(\textrm{SNR})$, where $M=N=2$ and $N'=M'$. Note that the sum DoFs can be observed from the slopes in the figure. We can see that the sum DoFs obtained by the simulation are well matched with the sum DoFs stated in Lemma 1 and Theorem 1.
Here, when the simulation is performed, the number of streams of hybrid beam-forming for each user is set by $d_1=2$ and $d_2=1$ for $N'=M'=3$ and $d_1=d_2=2$ for $N'=M'=4$ by following Theorem 1.

As shown in the figure, the full digital scheme can only achieve the sum DoF of two, while the sum DoF of the interference-free channel is four. When hybrid beam-forming is employed, we can see by simulation that the sum DoF can be improved and even reach up to the interference-free DoF, as shown in Theorem 1, and therefore the performance gap between hybrid beam-forming and full digital cases dramatically increases as the SNR increases.
%

\begin{figure}[t!]
\begin{center}
\includegraphics[width=0.7\textwidth]{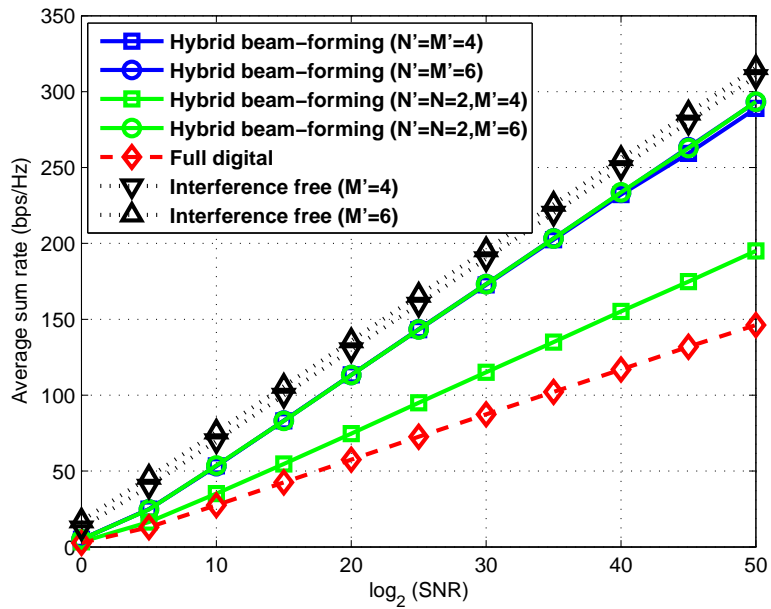}
\end{center}
\caption{Average sum rates of the three-user case when $M=N=2$.}
\label{Fig:SumRate-ThreeUser}
\end{figure}

\subsection{Average Sum Rate for the Three-user Case}
As in the previous subsection, the average sum rate is plotted as a function of $\log_2(\textrm{SNR})$ in Fig.~\ref{Fig:SumRate-ThreeUser}, where $M=N=2$. When hybrid beam-forming is used, we consider the two different scenarios in which additional antennas are employed only at transmitters, i.e., $N'=N=2$ for $M'=4$ and 6, and additional antennas are employed both at transmitters and receivers, i.e., $N'=M'$ for $M'=4$ and 6. Here, we adopt the distributed interference alignment\footnote{Note that the achievable scheme proposed in Theorem 2 requires an arbitrary large number of symbol extension. Therefore, in this subsection, instead of adopting the achievable scheme in Theorem 2 directly, we employ the DIA algorithm to numerically show that the sum DoF stated in Theorem 2 is indeed feasible. Here, Theorem 2 provides theoretical guidance when selecting a suitable number of streams for each user.}   (DIA) algorithm proposed in~\cite{DIA} for numerical simulation and the number of streams of hybrid beam-forming used for the simulation is given by Theorem 2. The slopes in the figure show that the sum DoFs stated in Theorem 2 is indeed achievable.

The full digital scheme can only achieve the sum DoF of three, while the sum DoF of the interference-free channel is six as shown in the figure. As in the two-user case, the sum DoF of the full digital scheme is only half of that of the interference-free channel. When $N'=M'=4$ and $6$, the hybrid beam-forming can achieve the maximum sum DoF of six as if there is no interference between users. Interestingly, for the case in which additional antennas are employed only at transmitters $(N'=N=2,M'=4,6)$, the sum DoF can also be increased as compared to the full digital case, and the performance gain over the full digital case increases as the number of additional antennas increases.




\section{Proof of Theorem 1}
\subsection{Achievability}
In our achievable scheme, we will use only $d_i$ transmit RF chains out of $M_i$ RF chains of transmitter $i$ and $d_i$ receive RF chains out of $N_i$ RF chains of receiver $i$, for all $i=1,2$. Hence, from now on, we can equivalently consider the $(d_i, M_i')\times (d_i, N_i')$ interference channel instead of the original channel, the $(M_i, M_i')\times (N_i, N_i')$ interference channel. In addition, since our achievable scheme operates in a single time slot, we omit the time index $t$ for brevity.

We design the input signal of transmitter $i$ as
\begin{align*}
\mathbf{x}_i=\mathbf{V}'_i\mathbf{V}_i\mathbf{s}_i,
\end{align*}
where $\mathbf{V}'_i$ is the $M_i'\times d_i$ transmit analog precoder, $\mathbf{V}_i$ is the $d_i\times d_i$ transmit digital precoder, and $\mathbf{s}_i\sim \mathcal{CN}\left(\mathbf{0}_{d_i\times d_i},\frac{P}{d_i}\mathbf{I}_{d_i}\right)$ is the $d_i\times 1$ vector of transmitted Gaussian symbols of user $i$. To be specific, beam-forming vectors in $\mathbf{V}'_i$ can be decomposed into two parts:
\begin{align*}
\mathbf{V}'_{i}=\left[\begin{array}{ccccccccccccccc}
\mathbf{V}'_{ii}& \mathbf{V}'_{i0}
\end{array}\right]
\end{align*}
\begin{itemize}
\item $\mathbf{v}'_{ii,k}$ denotes the $k$th beam-forming vector in $\mathbf{V}'_{ii}$ such that $\mathbf{H}_{ii}\mathbf{v}'_{ii,k}\neq 0$ and $\mathbf{H}_{ji}\mathbf{v}'_{ii,k}= 0$, where $i\neq j$. Note that since the size of $\mathbf{H}_{ji}$ is given by $N'_j\times M'_i$ and channel matrices are drawn i.i.d from a continuous distribution, the maximum number of linearly independent beam-forming vectors satisfying this condition is $\max(0,M_i'-N_j')$. Let $d_{ii}\leq \max(0,M_i'-N_j')$ denote the number of such vectors.
\item $\mathbf{v}'_{i0,l}$ denotes the $l$th beam-forming vector in $\mathbf{V}'_{i0}$ whose coefficients are randomly generated from a continuous distribution and $0<||\mathbf{v}'_{i0,l}||\leq \alpha$, where $\alpha$ has a finite value. Hence, $\mathbf{H}_{ii}\mathbf{v}'_{i0,l}\neq0$ and $\mathbf{H}_{ji}\mathbf{v}'_{i0,l}\neq0$ for $i\neq j$ with probability one. Let $d_{i0}=d_{i}-d_{ii}$ denote the number of such vectors. In addition, we further restrict $d_i$ and $d_{j0}$ to satisfy $d_{i}+d_{j0}\leq N_i'$.
\end{itemize}
In summary, we choose $d_1$, $d_{11}$, $d_{10}$, $d_2$, $d_{22}$, and $d_{20}$ to satisfy the following conditions.
\begin{align}
&0\leq d_1=d_{11}+d_{10}\leq \min(M_1,N_1)\\
&0\leq d_2=d_{22}+d_{20}\leq \min(M_2,N_2)\\
&0\leq d_{11}\leq \max(0,M_1'-N_2')\\
&0\leq d_{22}\leq \max(0,M_2'-N_1')\\
&0\leq d_1+d_{20}\leq N_1'\\
&0\leq d_2+d_{10}\leq N_2'
\end{align}

Then the received signal at receiver $i\in\{1,2\}$ at RF domain is given by
\begin{align}
\mathbf{y}_i&=\mathbf{H}_{ii}\mathbf{x}_i+\mathbf{H}_{ij}\mathbf{x}_j+\mathbf{z}_i \nonumber \\
&=\mathbf{H}_{ii}\mathbf{V}'_i\mathbf{V}_i\mathbf{s}_i+\mathbf{H}_{ij}\mathbf{V}'_j\mathbf{V}_j\mathbf{s}_j+\mathbf{z}_i \nonumber \\
&=\mathbf{H}_{ii}\mathbf{V}'_i\mathbf{V}_i\mathbf{s}_i+\mathbf{H}_{ij}\left[\begin{array}{cc}\mathbf{0}_{M'_j\times d_{jj}}&\mathbf{V}'_{j0}\end{array}\right]\mathbf{V}_j\mathbf{s}_j+\mathbf{z}_i,
\end{align}
where (10) is due to the properties of $\mathbf{V}'_{jj}$ and $\mathbf{V}'_{j0}$.

Now we explain the beam-forming matrix at receiver $i$. Denote $\mathbf{U'}_i$ as the $N_i'\times d_i$ receive analog precoder and $\mathbf{U}_i$ as the $d_i\times d_i$ receive digital precoder. We set $\mathbf{U'}_i$ such that $\mathbf{U'}_i^{\ast}\mathbf{H}_{ij}\left[\begin{array}{cc}\mathbf{0}_{M'_j\times d_{jj}}&\mathbf{V}'_{j0}\end{array}\right]\mathbf{V}_j=\mathbf{0}$ and $\text{rank}(\mathbf{U'}_i^{\ast}\mathbf{H}_{ii}\mathbf{V}'_i\mathbf{V}_i)=d_i$. Since we have
\begin{align*}
&\text{rank}\left(\mathbf{H}_{ii}\mathbf{V}'_{i}\mathbf{V}_i\right)=d_{i}\\
&\text{rank}\left(\mathbf{H}_{ij}\left[\begin{array}{cc}\mathbf{0}_{M'_j\times d_{jj}}&\mathbf{V}'_{j0}\end{array}\right]\mathbf{V}_j\right)=d_{j0}\\
&d_i+d_{j0}\leq N_i',
\end{align*}
we can find $\mathbf{U'}_i$ satisfying these conditions. Therefore, after applying receive analog precoding, we obtain
\begin{align*}
\mathbf{U'}_i^{\ast}\mathbf{y}_i&=\mathbf{U'}^{\ast}_i\mathbf{H}_{ii}\mathbf{V}'_i\mathbf{V}_i\mathbf{s}_i+\mathbf{U'}^{\ast}_i\mathbf{H}_{ij}\left[\begin{array}{cc}\mathbf{0}_{M'_j\times d_{jj}}&\mathbf{V}'_{j0}\end{array}\right]\mathbf{V}_j\mathbf{s}_j+\mathbf{U'}^{\ast}_i\mathbf{z}_i\\
&=\mathbf{U'}^{\ast}_i\mathbf{H}_{ii}\mathbf{V}'_i\mathbf{V}_i\mathbf{s}_i+\mathbf{U'}^{\ast}_i\mathbf{z}_i.
 \end{align*}
Recall that $\text{rank}(\mathbf{U'}^{\ast}_i\mathbf{H}_{ii}\mathbf{V}'_i\mathbf{V}_i)=d_i$. Now, we set $\mathbf{U}_i$ and $\mathbf{V}_i$ as the left and right singular matrices of the matrix $\mathbf{U'}^{\ast}_i\mathbf{H}_{ii}\mathbf{V}'_i$, respectively. Then we get $d_i$ parallel AWGN channels for user $i$ after applying the receive digital precoding as follows:
\begin{align*}
\mathbf{U}^{\ast}_i\mathbf{U'}_i^{\ast}\mathbf{y}_i=\mathbf{y}_i^{[e]}&=\mathbf{U}^{\ast}_i\mathbf{U'}^{\ast}_i\mathbf{H}_{ii}\mathbf{V}'_i\mathbf{V}_i\mathbf{s}_i+\mathbf{U}^{\ast}_i\mathbf{U'}^{\ast}_i\mathbf{z}_i\\
&=\Lambda_i \mathbf{s}_i +\mathbf{z}^{[e]}_i,
\end{align*}
where $\Lambda_i$ is the $d_i\times d_i$ diagonal matrix with the singular values of $\mathbf{U'}^{\ast}_i\mathbf{H}_{ii}\mathbf{V}'_i$ on the diagonal and $\mathbf{z}^{[e]}_i=\mathbf{U}^{\ast}_i\mathbf{U'}^{\ast}_i\mathbf{z}_i$. Therefore, we can see that each user achieves $d_i$ DoF via the proposed scheme, and thus the achievable total DoF is given by $\Gamma\geq d_1+d_2$.

Finally, by evaluating the conditions (4)--(9) using the Fourier-Motzkin elimination, we get the desired bound:
\begin{align*}
\Gamma\geq\min\{M_1+M_2, N_1+N_2, M_1+N_2, M_2+N_1, \max\{M_1',N_2'\}, \max\{M_2',N_1'\}\},
\end{align*}
which completes the achievability proof of Theorem 1.

\subsection{Converse}
From the result of Lemma 1, the DoF of the $(M_i,M_i')\times (N_i,N_i')$ PTP channel for each user $i$ is equal to $\min\{M_i,N_i\}$. Therefore, for the two-user $(M_i,M_i')\times (N_i,N_i')$ interference channel, the sum DoF cannot be more than $\sum_{i=1}^{2}\min\{M_i,N_i\}$, i.e.,
\begin{align}
\Gamma \leq \min\{M_1+M_2,M_1+N_2, N_1+M_2, N_1+N_2\}.
\end{align}

Now suppose we add $M_i'-M_i$ transmit RF chains at transmitter $i$ and $N'_i-N_i$ receive RF chains at receiver $i$ for all $i=1,2$ to form the conventional full digital $(M_i',M_i')\times(N_i',N_i')$ interference channel. Then the sum DoF $\Gamma_f$ of this channel is upper bounded by
\begin{align}
\Gamma_f\leq \min\{M'_1+M'_2, N'_1+N'_2, \max\{M'_1,N'_2\}, \max\{M'_2,N'_1\}\}
\end{align} from the result of~\cite{Jafar07}.
Clearly, adding more RF chains does not reduce the capacity region, and hence (12) is also an upper bound for the original channel.

Combining (11) and (12), we get the desired upper bound as
\begin{align*}
\Gamma\leq \min\{M_1+M_2,M_1+N_2, N_1+M_2, N_1+N_2, \max\{M'_1,N'_2\}, \max\{M'_2,N'_1\}\},
\end{align*}
which completes the converse proof of Theorem 1.

\section{Proof of Theorem 2}
\subsection{Achievability}
Our achievability is motivated by the interference alignment scheme proposed for the $K$-user full digital $(M,M) \times (N,N)$ interference channel in~\cite{Tiangao:10}. Here, we extend the previous scheme to be suitable for the general $K$-user $(M,M') \times (N,N')$ interference channel with hybrid beam-forming. For brevity, we focus on explaining the steps needed for hybrid beam-forming cases.

Consider the ratio $R=\left\lfloor\frac{\max\{M',N'\}}{\min\{M',N'\}}\right\rfloor$. Similar in~\cite{Tiangao:10}, when $K\leq R$, our achievable scheme is based on zero forcing while it is based on interference alignment when $K>R$. Note that reciprocity holds for both zero forcing and interference alignment, i.e., the achievable sum DoF of the $K$-user $(M,M') \times (N,N')$ interference channel via zero forcing and/or interference alignment is equal to the that of the $K$-user $(N,N') \times (M,M')$ interference channel~\cite{Cadambe107,DIA}. Therefore, without loss of generality, we assume that $M'\leq N'$, which results in $R=\left\lfloor\frac{N'}{M'}\right\rfloor$.


\subsubsection{$K\leq R$}
In this case, since our achievable scheme operates in a single time slot, we omit the time index $t$ for brevity.

Each transmitter sends $d=\min\{M,N\}$ data streams using hybrid beam-forming, i.e.,
\begin{align*}
\mathbf{x}_i=\mathbf{V}'_i\mathbf{V}_i\mathbf{s}_i,
\end{align*}
where $\mathbf{V}'_i$ is the $M_i'\times d$ transmit analog precoder, $\mathbf{V}_i$ is the $d\times d$ transmit digital precoder, and $\mathbf{s}_i\sim \mathcal{CN}\left(\mathbf{0}_{d\times d},\frac{P}{d}\mathbf{I}_{d}\right)$ is the $d\times 1$ vector of transmitted Gaussian symbols of user $i$. Here we set that coefficients of $\mathbf{V}'_i$ and $\mathbf{V}_i$ are randomly generated from a continuous distribution and $0<||\mathbf{v}'_{i0,l}||\leq \alpha$, where $\alpha$ has a finite value. Then the received signal at receiver $i\in\{1,2,\ldots,K\}$ at RF domain is given by
\begin{align*}
\mathbf{y}_i&=\mathbf{H}_{ii}\mathbf{x}_i+\sum_{j=1,j\neq i}^{K}\mathbf{H}_{ij}\mathbf{x}_j+\mathbf{z}_i\\
&= \mathbf{H}_{ii}\mathbf{V}'_i\mathbf{V}_i\mathbf{s}_i+\sum_{j=1,j\neq i}^{K}\mathbf{H}_{ij}\mathbf{V}'_j\mathbf{V}_j\mathbf{s}_j+\mathbf{z}_i.
\end{align*}
Observe that $\text{rank}(\mathbf{H}_{ii}\mathbf{V}'_i\mathbf{V}_i)=d$ and $\text{rank}([\begin{array}{ccc}\mathbf{H}_{i1}\mathbf{V}'_1\mathbf{V}_1&\cdots &\mathbf{H}_{iK}\mathbf{V}'_K\mathbf{V}_K\end{array}])=Kd$, $\forall i=1,2,\ldots,K$. Since $Kd=K\min\{M,N\}\leq KM'\leq RM'\leq N'$, we can completely null out all the interference at each receiver $i$ by setting analog beam-forming matrix, $\mathbf{U'}_i$, as
\begin{align*}
\mathbf{U'}_i^{\ast}\left[\begin{array}{ccccccc}\mathbf{H}_{i1}\mathbf{V}'_1\mathbf{V}_1&\cdots&\mathbf{H}_{i,i-1}\mathbf{V}'_{i-1}\mathbf{V}_{i-1} &\mathbf{H}_{i,i+1}\mathbf{V}'_{i+1}\mathbf{V}_{i+1}&\cdots& \mathbf{H}_{iK}\mathbf{V}'_{K}\mathbf{V}_{K}\end{array}\right]=\mathbf{0},
\end{align*}
while $\text{rank}(\mathbf{U}_i^{\ast}\mathbf{H}_{ii}\mathbf{V}'_i\mathbf{V}_i)=d$ can also be satisfied for the desired signals. As a result, each user can achieve $d$ DoF, and thus achieving $\Gamma\geq Kd=K\min\{M,N\}$.

\subsubsection{$K>R$}
In this case, before we explain our achievable scheme, we first refer to the following Lemma in~\cite{Tiangao:10}.
\begin{lemma}
For the $K(>R+1)$-user full digital $(1,1)\times (R,R)$ single--input multiple--output (SIMO) interference channel, the sum DoF of $\frac{R}{R+1}K$ can be achieved.
\end{lemma}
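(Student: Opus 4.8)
The plan is to prove this by asymptotic interference alignment in the style of Cadambe--Jafar, adapted to the multiple-antenna receivers. Since each transmitter has a single antenna, I would first invoke a symbol (time) extension of length $T$ over which the channel takes distinct generic realizations, so that transmitter $j$ beamforms a symbol vector $\mathbf{s}_j\in\mathbb{C}^{d}$ through a precoder $\mathbf{V}_j\in\mathbb{C}^{T\times d}$ and its contribution at receiver $i$ lives in the $RT$-dimensional extended receive space with effective channel $\tilde{\mathbf{H}}_{ij}$, the block-diagonal $RT\times T$ matrix whose $t$-th block is the $R\times 1$ vector $\mathbf{h}_{ij}(t)$. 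The target is $d\approx\frac{R}{R+1}T$ streams per user, so that the per-user DoF tends to $\frac{R}{R+1}$ and the sum DoF to $\frac{R}{R+1}K$ as $T\to\infty$. Note that a finite extension cannot suffice here: since $K>R+1$, the number of interferers exceeds what any bounded scheme can null, which is exactly why the asymptotic construction is needed (consistent with the footnote that the scheme requires arbitrarily large symbol extension).

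For the alignment design, at receiver $i$ the desired signal $\tilde{\mathbf{H}}_{ii}\mathbf{V}_i$ must remain $d$-dimensional and linearly independent of the span of all interference $\{\tilde{\mathbf{H}}_{ij}\mathbf{V}_j\}_{j\neq i}$. Because receiver $i$ resolves $RT$ dimensions while the desired signal needs only $d$, the interference budget is $RT-d=Rd$ (using $d=\frac{R}{R+1}T$). I would build the columns of each $\mathbf{V}_j$ from the standard multiplicative product-of-channel families indexed by bounded exponent tuples, chosen so that at every receiver the $K-1$ interfering signals, which would occupy $(K-1)d$ dimensions if left unaligned, collapse into a common subspace of dimension at most $Rd=\frac{R^2}{R+1}T$. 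The $R$ receive antennas are precisely what enlarge this interference budget by a factor $R$ over the SISO case, permitting the higher per-user target $\frac{R}{R+1}$ rather than $\frac12$; the bookkeeping desired $d$ plus aligned interference $Rd$ equals $(R+1)d=RT$, so the full receive space is used. A dimension count over the extension, closed as $T\to\infty$ as in the Cadambe--Jafar argument, then confirms $d/T\to\frac{R}{R+1}$.

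For decodability, each receiver applies a linear (zero-forcing) combiner across its $RT$ dimensions to project out the aligned interference subspace and recover the $d$ desired streams. This requires the desired subspace and the aligned interference subspace to be in general position, i.e.\ to intersect trivially. I would establish this by a genericity argument: the relevant determinants are nonzero polynomials in the channel coefficients (using that the per-slot channels are i.i.d.\ from a continuous distribution and that distinct monomials in the product construction yield almost-surely independent directions), hence nonzero with probability one.

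The main obstacle is the simultaneous alignment-with-independence step: one must design a single beamformer family that aligns interference at \emph{every} receiver at once while keeping the desired signal separable at every receiver, even though the block-diagonal SIMO channels $\tilde{\mathbf{H}}_{ij}$ do not commute the way scalar diagonal channels do in the pure SISO setting. I expect to handle this non-commutativity either by passing to the reciprocal MISO picture (legitimate here, since reciprocity preserves the achievable DoF, as already observed in the $K>R$ discussion) or by enlarging the exponent ranges so that the counting still closes, and then verifying the almost-sure linear independence; this genericity verification, together with the counting that yields exactly the $\frac{R}{R+1}$ fraction, is the technical heart of the argument.
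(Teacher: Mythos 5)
Your proposal has to be measured against what the paper actually offers for this lemma, which is nothing but a citation: the paper's entire ``proof'' is that the result is proved in \cite[Theorem 2]{Tiangao:10} (Gou--Jafar). Against that cited proof, your skeleton is the right one: symbol extension, per-user target $d/T\to\frac{R}{R+1}$, the budget count $d+Rd=(R+1)d=RT$, product-form Cadambe--Jafar beam families, and an almost-sure genericity argument for separability are exactly the ingredients of the Gou--Jafar achievability. You also put your finger on the correct crux: over the extension the SIMO channels are rectangular $RT\times T$ block-diagonal matrices, so the CJ monomial construction is not merely non-commutative on them --- it is not even defined, since such matrices cannot be composed with one another.

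The gap is that this crux is the entire content of the proof, and you defer it, offering two escapes without committing to or executing either. One of them (``enlarging the exponent ranges so that the counting still closes'') cannot work: no choice of exponents repairs a beam construction that has no meaning for rectangular block-diagonal channels; the problem is structural, not combinatorial. The other (reciprocity to MISO) is viable but still needs the key idea you do not state: in the reciprocal MISO channel, decompose each $R$-antenna transmitter into $R$ independent single-antenna virtual transmitters, so that every effective channel becomes a diagonal $T\times T$ matrix; then the CJ construction aligns the signals of all $(K-1)R$ interfering virtual users at each single-antenna receiver into one common subspace of dimension $(n+1)^p$ while the $R$ desired virtual users send $n^p$ beams each, and with $T=Rn^p+(n+1)^p$ the per-user DoF is $\frac{Rn^p}{Rn^p+(n+1)^p}\to\frac{R}{R+1}$; linear-scheme reciprocity (which the paper itself invokes in Section VII) then carries this back to SIMO. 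Note also that your symmetric picture of all interference collapsing ``into a common subspace of dimension at most $Rd$'' is not what the direct SIMO proof does: there, over $T=(R+1)(n+1)^p$ slots the allocation is asymmetric (the first $R+1$ users get $R(n+1)^p$ streams and the rest get $Rn^p$ --- precisely the numbers this paper reuses when adapting the scheme in Section VII), the $R$ other ``big'' interferers at each big receiver are left unaligned and exactly fill the complement of the desired space, and only the remaining $K-R-1$ interferers are aligned into their span. So your approach is the intended one, but the step you yourself label the ``technical heart'' is asserted rather than proved, and only one of your two proposed routes to it can be made to work.
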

\begin{proof}
The proof is provided in~\cite[Theorem 2]{Tiangao:10}.
\end{proof}

Now consider the $KM'(>R+1)$--user full digital $(1,1)\times (R,R)$ SIMO interference channel. By adapting the achievable scheme in Lemma 4, we can achieve the sum DoF of $\frac{R}{R+1}KM'$. To be specific, under this scheme, $T=(R+1)(n+1)^p$ symbol extension of the original channel is considered, where $p=M'KR(M'K-R-1)$ and $n\in\mathbb{N}$ is an arbitrary integer, and each user $i\in \{1,2,\ldots, R+1\}$ achieves $d^{s}_i=R(n+1)^p$ DoF and each user $i\in \{R+2,R+3,\ldots, M'K\}$ achieves $d^{s}_i=Rn^p$ DoF over the extended channel, i.e.,
\begin{align*}
d_i^s= \left\{\begin{array}{ll} R(n+1)^p & \textrm{if $i\leq R+1 $}, \\
Rn^p & \textrm{if $i> R+1$.}
\end{array} \right.
\end{align*}
In addition, by applying the scheme, it turns out that the dimensions of the signal space spanned by the desired signal vectors and interference signal vectors at receiver $i\in \{1,2,\ldots, R+1\}$ are given by $R(n+1)^p$ and $R^2(n+1)^p$ out of the $RT=R(R+1)(n+1)^p$ dimensional signal space, respectively, while they are given by $Rn^p$ and $R(R+1)(n+1)^p-Rn^p$, respectively, at receiver $i\in \{R+2,R+3,\ldots, KM'\}$. Let $\mathbf{\tilde{V}}'_i$ denote the $T\times d^{s}_i$ beam-forming matrix of transmitter $i$ used for the extended channel of the $KM'$-user $(1,1)\times (R,R)$ interference channel. We denote the elements of $\mathbf{\tilde{V}}'_i$ as
\begin{align*}
\mathbf{\tilde{V}}'_i=\left[\begin{array}{ccccccc}v_{i,1}(1) & v_{i,2}(1) & \cdots & v_{i,d_i^s}(1) \\
v_{i,1}(2) & v_{i,2}(2) & \cdots & v_{i,d_i^s}(2)\\
\vdots & \vdots & \ddots & \vdots\\
v_{i,1}(T) & v_{i,2}(T) & \cdots & v_{i,d_i^s}(T)\\
  \end{array}\right],
\end{align*}
where $v_{i,j}(t)$ means the $j$th beam-forming coefficient of transmitter $i$ at time slot $t$.

Then, now consider the original channel, the $K$-user $(M, M')\times (N,N')$ interference channel. Here, we only use $RM'$ antennas out of $N'$ antennas at each receiver by discarding $N'-RM'$ antennas at each receiver, which results in the $K$-user $(M,M')\times (N,RM')$ interference channel, and then apply the $T$-time symbol extension as in the $KM'$-user $(1,1)\times (R,R)$ interference channel, which gives the overall channel matrix between transmitter $i$ and receiver $j$ as
\begin{align*}
\mathbf{\bar{H}}_{ij}=\left[\begin{array}{ccccccc}\mathbf{H}_{ij}(1) &\mathbf{0}_{RM'\times M'} & \cdots & \mathbf{0}_{RM'\times M'}\\
\mathbf{0}_{RM'\times M'}& \mathbf{H}_{ij}(2)  & \cdots & \mathbf{0}_{RM'\times M'}\\
\vdots & \vdots & \ddots & \vdots\\
\mathbf{0}_{RM'\times M'} & \mathbf{0}_{RM'\times M'} & \cdots & \mathbf{H}_{ij}(T) \\
  \end{array}\right].
\end{align*}
For this extended channel, by employing beam-forming coefficients proposed in the $KM'$-user $(1,1)\times (R,R)$ interference channel, we design the analog beam-forming matrix of transmitter $i\in\{1,2,\ldots,K\}$ as
\begin{align*}
\mathbf{{\bar{V}}}'_i=\left[\begin{array}{ccccccc}\mathbf{\bar{\bar{V}}'}_i(1)\\
\mathbf{\bar{\bar{V}}'}_i(2)  \\
\vdots \\
\mathbf{\bar{\bar{V}}'}_i(T)\\
\end{array}\right]
\end{align*}
where
\begin{align*}
\mathbf{\bar{\bar{V}}'}_i(t)=\left[\begin{array}{ccccccc}\mathbf{v}_{M'(i-1)+1}(t) &\mathbf{0}_{1\times d_{M'(i-1)+2}^s}& \cdots & \mathbf{0}_{1\times d_{M'i}^s}& \\
\mathbf{0}_{1\times d_{M'(i-1)+1}^s} & \mathbf{v}_{M'(i-1)+2}(t)&\cdots &\mathbf{0}_{1\times d_{M'i}^s} \\
\vdots & \vdots & \ddots & \vdots\\
\mathbf{0}_{1\times d_{M'(i-1)+1}^s} & \mathbf{0}_{1\times d_{M'(i-1)+2}^s} &\cdots &\mathbf{v}_{M'i}(t)\\
\end{array}\right]
\end{align*}
and $\mathbf{v}_{i}(t)=\left[\begin{array}{ccccccc}v_{i,1}(t) & v_{i,2}(t)& \cdots & v_{i,d_{i}^s}(t)\end{array}\right].$ Note that the number of column vectors in $\mathbf{{\bar{V}}}'_i$ is given by $c_i=M'R(n+1)^p$ for $i\in\{1,2,\ldots, K_1\}$, $c_{i}=(R+1-K_1M')R(n+1)^p+((K_1+1)M'-R+1)Rn^p$ for $i=K_1+1$, and $c_i=M'Rn^p$ for $i\in\{K_1+2,K_1+3,\ldots, K\}$, i.e.,
\begin{align*}
c_i= \left\{\begin{array}{ll} M'R(n+1)^p & \textrm{if $i\leq K_1 $}, \\
(R+1-K_1M')R(n+1)^p+((K_1+1)M'-R+1)Rn^p & \textrm{if $i= K_1+1$}, \\
M'Rn^p & \textrm{if $i\geq K_1+2$,}
\end{array} \right.
\end{align*}
where $K_1=\left\lfloor \frac{R+1}{M'}\right\rfloor$. In addition, we set the digital precoder of transmitter $i$ over the extended channel as
\begin{align*}
\mathbf{\bar{V}}_i=\left[\begin{array}{ccccccc} \mathbf{I}_{d_i}\\ \mathbf{0}_{(c_i-d_i)\times d_i} \end{array}\right],
\end{align*}
where $d_i=\min\{MT,NT,c_i\}$. Observe that we can choose first $d_i$ column vectors of $\mathbf{\bar{V}'}_i$ out of $c_i$ vectors by multiplying $\mathbf{\bar{V}'}_i$ and $\mathbf{\bar{V}}_i$ as $\mathbf{\bar{V}'}_i\mathbf{\bar{V}}_i$. Therefore, from the results of Lemma 4, we can see that the dimension of the signal space spanned by the desired signal vectors at receiver $i$ is equal to $d_i\leq c_i$ and the dimension of the signal space spanned by the interference signal vectors at receiver $i$ is less than or equal to $RM'T-c_i$ out of the $RM'T$ dimensional space. Hence, we can null out all the interferences at receiver $i$ via zero forcing beam-forming $\mathbf{\bar{U}'}_i$ over the extended channel, by setting $\mathbf{\bar{U}'}_i$ as the $RM'T\times d_i$ matrix such that $\mathbf{\bar{U}}_i'^{\ast}\mathbf{\bar{H}}_{ij}\mathbf{\bar{V}'}_j\mathbf{\bar{V}}_j=\mathbf{0}$ $\forall i\neq j$ and  $\text{rank}\left(\mathbf{\bar{U}}_i'^{\ast}\mathbf{\bar{H}}_{ii}\mathbf{\bar{V}'}_i\mathbf{\bar{V}}_i\right)=d_i$ $\forall i=1,2,\ldots K$. Furthermore, by setting the receive digital beam-forming matrix of receiver $i$ over the extended channel, $\mathbf{\bar{U}}_i$, as $\mathbf{\bar{U}}_i=\mathbf{I}_{d_i}$, each user $i$ can achieve $d_i$ DoF over the extended channel.

Finally, the achievable sum DoF is given by
\begin{align*}
\Gamma=\frac{1}{T}\sum_{i=1}^{K}d_i&=K_1\min\left\{M,N,\sup_n\frac{M'R(n+1)^p}{(R+1)(n+1)^p}\right\}\\
&+\min\left\{M,N,\sup_n\frac{(R+1-K_1M')R(n+1)^p+((K_1+1)M'-R+1)Rn^p}{(R+1)(n+1)^p}\right\}\\
&+(K-K_1-1)\min\left\{M,N,\sup_n\frac{M'Rn^p}{(R+1)(n+1)^p}\right\}\\
&=K\min\left\{\frac{RM'}{R+1}, M,N\right\},
\end{align*}
which completes the proof of the achievability of Theorem 2.


\subsection{Converse}
\subsubsection{$K\leq R$} Recall that from Lemma 1, the DoF of the $(M,M')\times (N,N')$ PTP channel is equal to $\min\{M,N\}$. Therefore, for the $K$-user $(M,M')\times (N,N')$ interference channel, the sum DoF cannot be more than $K\min\{M,N\}$, i.e., $\Gamma \leq K\min\{M,N\}$.
\subsubsection{$K> R$} Let $d_i$ denote the DoF for each user $i$. Similar in~\cite{Tiangao:10,Viveck1:09}, we first focus on an upper bound on $d_1+d_2+\ldots+d_{R+1}$. We eliminate all the messages except $W_1,W_2,\ldots,W_{R+1}$, which does not decrease $d_1+d_2+\ldots+d_{R+1}$ and results in the $R+1$-user $(M,M')\times (N,N')$ interference channel. Now we allow full cooperation among transmitters $1,2,\ldots, R$ and among receivers $1,2,\ldots, R$ to form the two-user $(M_i,M_i')\times (N_i,N_i')$ interference channel, where $M_1=RM$, $M'_1=RM'$, $M_2=M$, $M'_2=M'$, $N_1=RN$, $N'_1=RN'$, $N_2=N$, and $N'_2=N'$. Then, from the result of Theorem 1, the sum DoF of this channel is given by
\begin{align}
&d_1+\ldots+d_{R+1}\nonumber \\
&=\min\{(R+1)M,(R+1)N,RM+N,RN+M,\max\{RM',N'\}, \max\{RN',M'\}\}\nonumber \\
&=\min\{(R+1)M,(R+1)N,\max\{M',N'\}\}.
\end{align}
Since allowing full cooperation among some transmitters and among some receivers does not reduce the capacity region, (13) is also an upper bound for the original channel. Due to the symmetry, by picking any $R+1$ users out of $K$ users, we have the following upper bound for the original channel:
\begin{align}
d_{i_1}+\ldots+d_{i_{R+1}}&\leq \min\{(R+1)M,(R+1)N,\max\{M',N'\}\},
\end{align}
for all $i_1,i_2,\ldots,i_{R+1}\in\{1,2,\ldots,K\}$ with $i_1\neq i_2\neq \ldots\neq i_{R+1}$. Hence, summing up all such bounds, we finally have
\begin{align*}
\Gamma\leq K\min\left\{M,N,\frac{\max\{M',N'\}}{R+1}\right\}.
\end{align*}

\section{Conclusion}
In this paper, we studied the sum DoF of the $K$-user MIMO interference channels where each user is equipped with a larger number of antennas than the number of RF chains. For the two-user case, the sum DoF was completely characterized for arbitrary numbers of antennas and RF chains. For the $K$-user case ($K\geq 3$), the achievable DoF was derived under the symmetric antenna configuration. It is shown that our achievable scheme is optimal in achieving the sum DoF of the $K$ user hybrid beam-forming systems if the ratio $\frac{\max\{M',N'\}}{\min\{M',N'\}}$ is equal to an integer, where $M'$ and $N'$ denote the number of antennas at each transmitter and receiver, respectively.

Our work has revealed that hybrid beam-forming can provide a significant gain by nulling out interferences between users, and the gain dramatically increases as SNR increases. Moreover, interestingly, even the sum DoF performance of the interference-free channel can be achieved if we add enough number of antennas at either transmitter or receiver side only. Therefore, the results of this paper imply that employing hybrid beam-forming can be an attractive solution for enhancing the capacity of interference-limited networks.

\bibliographystyle{IEEEtran}
\bibliography{IEEEabrv,References}

\end{document}